\newtheorem{thm}{Theorem}[section]
\newtheorem{proposition}[thm]{Proposition}
\theoremstyle{definition}
\newtheorem{definition}[thm]{Definition}
\newtheorem{remark}[thm]{Remark}
\newcommand{\C} {\mathbb{C}}
\newcommand{\R}{\mathbb{R}}
\newcommand{\bh}{\mathbf{h}}
\newcommand{\Us}{\cU_q^{(s)}\hspace{-0.5mm}(\rsl_2)}
\newcommand{\Uj}{\cU_p^{(j)}\hspace{-0.5mm}(\rsl_2)}
\newcommand{\Ss}{\rSL^{\hspace{-0.6mm}(s)}_q\hspace{-0.6mm}(2,\C)}
\newcommand{\Sj}{\rSL^{\hspace{-0.6mm}(j)}_p\hspace{-0.6mm}(2,\C)}
\newcommand{\cH}{\mathcal {H}}
\newcommand{\cU}{\mathcal{U}}
\newcommand{\al}{\alpha}
\newcommand{\be}{\beta}
\newcommand{\ga}{\gamma}
\newcommand{\de}{\delta}
\newcommand{\ep}{\epsilon}
\newcommand{\fg}{\mathfrak g}
\newcommand{\se}{*_{\hspace{-0.4mm}\textrm{\tiny{e}}}}
\newcommand{\sre}{\star_{\hspace{-0.4mm}\textrm{\tiny{e}}}}
\newcommand{\sk}{*_{\hspace{-0.4mm}\textrm{\tiny{k}}}}
\newcommand{\srk}{\star_{\hspace{-0.4mm}\textrm{\tiny{k}}}}
\newcommand{\sts}{*_{\hspace{-0.4mm}\textrm{\tiny{s}}}}
\newcommand{\stk}{*_{\hspace{-0.4mm}\textrm{\tiny{k}}}}
\newcommand{\ste}{*_{\hspace{-0.4mm}\textrm{\tiny{e}}}}
\newcommand{\srs}{\star_{\hspace{-0.4mm}\textrm{\tiny{s}}}}
\newcommand{\stl}{*_{\hspace{-0.4mm}\textrm{\tiny{l}}}}
\newcommand{\srl}{\star_{\hspace{-0.4mm}\textrm{\tiny{l}}}}
\newcommand{\Uqs}{\cU_q^{(s)}\hspace{-0.5mm}(\rsl_2)}
\newcommand{\slqs}{\rSL^{(s)}_{q\hspace{-0.6mm}}(2,\C)}
\newcommand{\slpj}{\rSL^{(j)}_{p\hspace{-0.6mm}}(2,\C)}
\newcommand{\rSU}{\mathrm{SU}}
\newcommand{\rSL}{\mathrm{SL}}
\newcommand{\rSO}{\mathrm{SO}}
\newcommand{\rsu}{\mathrm{su}}
\newcommand{\rsl}{{\mathrm{sl}}}
\newcommand{\lra} {\longrightarrow}
\newcommand{\beq}{\begin{equation}}
\newcommand{\eeq}{\end{equation}}
\begin{document}

\begin{center}

{\Large \bf The $q$-linked complex Minkowski space,}

\bigskip

{\Large \bf its real forms and deformed isometry groups}

\bigskip

\centerline{ R. Fioresi$^\dagger$, E. Latini$^\star$, A. Marrani$^\sharp$}
\vskip 1 cm
$^\dagger${\sl Dipartimento di Matematica, Universit\`{a} di
Bologna\\Piazza di Porta S. Donato 5, I-40126 Bologna, Italy}\\
\texttt{rita.fioresi@UniBo.it}

\vskip 0.5 cm

$^\star$ {\sl Dipartimento di Matematica, Universit\`{a} di
Bologna\\Piazza di Porta S. Donato 5, I-40126 Bologna, Italy\\and INFN, Sez. di Bologna, \\
viale Berti Pichat 6/2, 40127 Bologna, Italy\\
 \texttt{emanuele.latini@UniBo.it}}

\vskip 0.5 cm

$^\sharp${\sl Museo Storico della Fisica e Centro Studi e
Ricerche ``Enrico Fermi" \\ Via Panisperna 89A, I-00184, Roma, Italy}
\\ \texttt{jazzphyzz@gmail.com}

 \end{center}

\bigskip

\medskip
\begin{abstract}
We establish duality between 
real forms of the quantum deformation of the 
4-dimensional orthogonal group studied by Fioresi et al. in \cite{flm}
and the classification work made by Borowiec
{\it et al.} in \cite{luk}. Classically these real forms are 
the isometry groups of 
$\mathbb{R}^4$ equipped with  Euclidean, Kleinian or Lorentzian metric. 
A general deformation, named $q$-linked, 
of each of these spaces is then constructed, 
together with the coaction of the corresponding
isometry group.

\end{abstract}

\section{Introduction}
One of the most intriguing problems in theoretical and mathematical physics 
is a consistent formulation of a quantum theory of gravity. 
The most relevant models are superstring theory
and loop quantum gravity; in both of them a minimal length or minimal
volume quanta occur, suggesting then that the
infinitely differentiable space-time manifold may be just a low energy 
picture, that should break down at very small scales. 
The Einstein theory of general relativity relates gravitational interaction to 
the geometry of space, 
thus it is natural to expect that,
for a complete and deep understanding of this problem, 
the classical notion of space-time should be superseded by some quantum notion \cite{Snyder} in which the minimal length plays a fundamental role and this task can be achieved in the framework of noncommutative
geometry. 
The main idea is then to use the duality ``spaces''-``algebras of 
functions'' and replace the commutative algebra of functions on a space 
by a noncommutative associative algebra regarded as the algebra of functions on
some noncommutative space \cite{frt,NC-1,NC-2,NC-3}. 
Physically, the algebra of 
``quantum space-time coordinate'' works as a regulator, with the 
Planck length being the minimal resolution length scale which
generates uncertainty relations among non commuting coordinates 
\cite{NC-4,ma2,NC-5,NC-6}.

This indicates 
 that the space-time symmetries must be
modified into quantum symmetries; mathematically this idea is 
captured  by  the theory of (non-commutative/cocommutative) Hopf
algebras, named, after Drinfeld, 
quantum groups \cite{quantum-groups}.

There are essentially two equivalent ways to approach quantum groups: 
via a deformation of the universal
enveloping algebra (QUEA) $\cU_q(\fg)$ of a semisimple Lie algebra $\fg$ or via
a quantization of the algebra of functions (QFA) $\mathrm{G}_q$
of a Lie group $\mathrm{G}$, 
(for more details see \cite{cp} Ch. 6). In this work,
our starting point are the deformations of  $\textrm{SO}(4)$ and its real forms compatible with a suitable quantum geometrical structure. In \cite{luk} the authors used the QUEA approach to classify all possible real forms of the quantum enveloping algebra $\cU_q(\mathrm{so}(4,\C))$ by means of the analysis of the $r$ matrix. Their idea was to view $\cU(\mathrm{ so}(4,\C))$ as the direct sum of two  $\cU(\rsl_2)$ (named chiral and antichiral sectors in analogy with the theory of spinors) and since the standard  and Jordanian deformations of $\cU(\rsl_2)$ and their real forms are well understood, the analysis of the coalgebra and star structure of $\cU_q(\mathrm{ so}(4,\C))$ follows almost naturally. Moreover in analyzing all possible classical $r$ matrices of the direct  sum of two $\cU_q(\rsl_2)$, in $\cite{luk}$ the authors showed that one can add a linking term 
mixing
the chiral and antichiral sector in the coalgebra
structure: we will denote this Hopf algebra by 
$\cU_{q,q'}(\rsl_2^{(s)}\stackrel{\ell}{\oplus}\rsl_2^{(s)})$.

\medskip
The main task of this paper is instead to construct the most general 
quantum deformation of the four dimensional Euclidean Klein and Minkowski 
spaces viewed as quantum homogeneous spaces
\footnote{An alternative approach to directly obtain the real forms by using (split)division algebras was discussed also in \cite{FL-1,flm2}.}. 
In order to achieve this task the QFA approach turns out to be more convenient, since it is easier to understand the comodule structures. We thus construct the duality between the Hopf star algebras classified in  \cite{luk} and certain generalization of the QFA of $\mathrm{SO}(4)$ discussed in \cite{flm}. The explicit realization of this duality is also motivated by the aim of developing a differential calculus on the deformed Minkowski space, viewed as a quantum homogeneous space, along the line of \cite{Zampini}. This will allow us to define the complex $q$-linked Minkowski space as a special 
comodule of the complex Lorentz group 
$ \mathrm{SL}_q^{(s)}(2,\C)\stackrel{\ell}{\oplus}  \mathrm{SL}_{q}^{(s)}(2,\C)$. 
 Technically this is realized, along the line of \cite{fl,cfln,fi2,fi3,fi4} 
(see also \cite{flv,cfl,flln} for a generalization to the super setting) and
\cite{Manin2, Sudbery}. 
We also speculate on the interpretation of the "new" parameter $\ell$ ,
showing that the product of two $\ell$ commuting (or linked) 
quantum planes (or quantum spinors in the physical language \cite{qspin}) 
reproduces the deformed algebra of the complex $q$-linked Minkowski space.
This interpretation nicely fits with a recent work \cite{Landi}, 
where the non commutative product of spheres and planes is studied.
Notice 
that our work is somehow a generalization of the construction introduced by 
Truini and Varadarajan in \cite{tv}, where they first 
introduced the link parameter, denoted here by $\ell$.
In the end, we analyze the compatible real structures 
of the complex orthogonal quantum group in four dimensions
with the $q$-linked quantum Klein, Euclidean and Minkowski spaces.

\medskip
The organization of this paper is as follows.

\medskip
In Sections \ref{prelim} and 3  we describe how to
obtain a quantum deformation of the complex
special linear group, by means of the QUEA and QFA approach. We then establish
a duality between the complex standard and Jordanian QUEA and the corresponding
QFA. The material of this section is generally known, however, since the
presence of several conventions in the literature, for the
reader's convenience, we have briefly recapped the main results we need.

In Section 4 we analyze the real forms 
of the quantum orthogonal group compatible with the duality discussed above.

Section 5 is then devoted to the construction and definition of 
interesting comodules of those real structures whose 
classical limit (parameters $q$ and $\ell$ specialized to one) 
coincide with Euclidean Klein and Minkowski spaces.

\bigskip
{\bf Acknowledgements}. We wish to thank 
P. Truini, F. Gavarini, F. Bonechi, A. Zampini
and V.S. Varadarajan for helpful comments.

\section{Preliminaries}\label{prelim}

In this section, we establish our notation and provide the
two equivalent settings of QUEA and QFA deformations in the explicit example of 
$\fg=\rsl_2(\C)$ and $\mathrm{G}=\rSL_2(\C)$, together with
the duality between QUEA's and QFA's, that we need
for our applications.
This material can be found in \cite{cp,kassel,ks}.

The approach to the theory of quantum
deformations via the QUEA
is the most used in the literature (see \cite{cp} Ch. 6 and references therein).
We can obtain deformations of the enveloping algebra $\cU(\rsl_2)$
by a standard or a Jordanian $r$ matrix yielding two different Hopf algebras 
structures, denoted respectively by $\cU_q^{(s)}\hspace{-0.5mm}(\rsl_2)$ 
and $\cU_p^{(j)}\hspace{-0.5mm}(\rsl_2)$. 

\medskip
{\bf Standard deformation (QUEA):} $\Uqs$. 
The algebra $\Uqs$
is the complex associative
algebra generated by $e_\pm$, $q^{\pm \bh}$ and
subject to the relations 
\beq\label{dj-rel}
\begin{array}{c} 
[e_+,e_-]=\frac{q^{2\bh}-q^{-2\bh}}{q-q^{-1}},\qquad
q^\bh e_{\pm}=q^{\pm 1}e_{\pm}q^\bh,\qquad
q^{\bh}q^{-\bh}=1
\end{array}
\eeq
with comultiplication, counit  given by:
\beq\label{dj-rel-coalgebra}
\begin{array}{rclcrcl}
\Delta(e_+)&=&e_{+} \otimes q^{\bh}+ q^{-\bh} \otimes e_{+}, &&&&\\&&&&\epsilon(e_\pm)&=&0, \\
\Delta(e_-)&=&e_{-} \otimes q^{\bh}+ q^{-\bh} \otimes e_{-}, &&&&\\
&&&&\epsilon(q^{\pm \bh} )&=&1  \\
\Delta(q^{\bh})&=&q^{\bh}\oplus q^{\bh}&&&&
\end{array}
\eeq
and antipode 
\beq\label{dj-rel-antipode}
S(e_{\pm })=-q^{\pm 1}e_{\pm},\quad S(q^{\pm \bh})=q^{\mp \bh}
\eeq


{\bf Jordanian deformation (QUEA): $\Uj$}. 
The algebra $\Uj$ is the
complex associative algebra generated by $E$, $F$, $H$
satisfying the 
relations:
\beq\label{j-rel}
[H,E]=E\,,\,\,\,\,\,\,\,\,\,\,\,\,[H,F]=F\,,\,\,\,\,\,\,\,\,\,\,\,\,[E,F]=2H
\eeq
In this case the $r$ matrix 
is triangular
(see \cite{luk,Ogie,Herranz}).
The Hopf algebra structure is given by:
$$
\begin{array}{rcl}
\Delta(E)&=&E \otimes e^{\sigma}+ 1 \otimes E\\[1mm]
\Delta(H)&=&H \otimes e^{-\sigma}+ 1 \otimes H\\[1mm]\Delta(F)&=&F \otimes 
e^{-\sigma}+1\otimes F-p H\otimes He^{-\sigma}-4p^2H(H+1)\otimes Ee^{-2\sigma}
\end{array}
$$
and 

$$
\begin{array}{rcl}
S(E)&=&-Ee^{-\sigma}\\
S(H)&=&-He^{-\sigma}\\
S(F)&=&-Fe^{\sigma}-4pH^2e^\sigma+4pH(H+1)Ee^{\sigma}
\end{array}
$$
where we have defined
$$
\sigma=\ln(1-2p E)
$$
In order to get a better expression for the coalgebra structure
we choose a different set of generators: 
$$
\begin{array}{c} 
t=e^{\frac \sigma 2},\quad
h=2He^{-\frac \sigma 2},\quad
y=e^{-\frac \sigma 2}(F+2pH)+\frac p 4 \sinh\frac \sigma 2
\end{array}
$$
These generators satisfy the relations:
$$
\begin{array}{c} 
[h,t]=t^2-1,\quad
[y,t]=-\frac p 2 (ht+th),\quad
[h,y]=-\frac 1 2(yt+ty+yt^{-1}+t^{-1}y).
\end{array}
$$
The coalgebra structure becomes
$$
\begin{array}{lr} 
\Delta(h)=h \otimes t+ t^{-1} \otimes h,\quad
\Delta(y)=y \otimes t+ t^{-1} \otimes y, \qquad
\\ \\
\Delta(t)=t \otimes t, \qquad 
\epsilon(h)=0=\epsilon(y),\quad\epsilon(t )=1=\epsilon(t^{-1})
\end{array}
$$
and
$$
\begin{array}{c} 
S(h)=-tht^{-1}, \quad
S(y)=-tyt^{-1}, \quad
S(t)=te^{-1}.
\end{array}
$$
     
\section{The function algebra approach}

We now construct the deformations of the function algebra of
 $\rsl_2(\C)$ corresponding to the standard and Jordanian QUEA
$\Uqs$, $\Uj$. 

\medskip
{\bf Standard deformation (QFA): $\slqs$}.
We define $\slqs$, the standard  deformation
of the function algebra of the algebraic group $\rSL_2(\C)$, 
as generated over $\C[q,q^{-1}]$ by $a$, $b$, $c$, $d$
subject to the relations, called the \textit{Manin relations}:
\beq\label{manin-rels}
\begin{array}{rclcrcl}
ab&=&q^{-1}ba, &\quad\qquad\quad&  ac&=&q^{-1}ca\\
 bd&=&q^{-1}db, &\quad\qquad\quad& cd&=&q^{-1}dc \\ 
ad-da&=&(q^{-1}-q)bc, &\quad\qquad\quad& bc&=&cb \\
\end{array}
\eeq
and
\beq\label{qdet}
ad-q^{-1}bc=1
\eeq
(See \cite{ma1} and \cite{ks}.
Warning: for \cite{ks} $q$ is replaced with $q^{-1}$).
The parameter $q$ may be interpreted both as an indeterminate
or as $q=e^h$, $h \in\C$. 
We retrieve the commutative function
algebra of $\rSL_2(\C)$ when we specialize $q$ to $1$.
The last relation (\ref{qdet}) imposes a constraint
on the so called \textit{quantum determinant} $ad-q^{-1}bc$ (see also
\cite{fi2, fi3, fi4} for more details).

$\slqs$ is endowed with an Hopf algebra structure by setting:
$$
\begin{array}{rcl}
\Delta \begin{pmatrix} a&b\\c&d\end{pmatrix}&=&\begin{pmatrix} a&b\\c&d\end{pmatrix}\otimes \begin{pmatrix} a&b\\c&d\end{pmatrix}\\[5mm]
\epsilon\begin{pmatrix} a&b\\c&d\end{pmatrix}&=&\begin{pmatrix} 1&0\\0&1\end{pmatrix}\\[5mm]
S\begin{pmatrix} a&b\\c&d\end{pmatrix}&=&\begin{pmatrix} d&-qb\\-q^{-1}c&a\end{pmatrix}
\end{array}
$$
where we use the common notation (see \cite{ma1, ks}), in which we 
organize the generators $a$, $b$, $c$, $d$ in
matrix form.

\medskip
{\bf Jordanian deformation (QFA):} $\slpj$.
Following \cite{Ohn} we define $\slpj$, the Jordanian  deformation
of the function algebra of $\rSL_2(\C)$, 
as generated over $\C[p,p^{-1}]$ by $\al$, $\be$, $\ga$, $\de$ subject
to the relations:
$$
\begin{array}{rclcrcl}
[\al,\be]&=&p\al^2-p, &\quad\qquad\quad&  [\al,\ga]&=&-p\ga^2\\[2mm]
 [\be,\de]&=&p-p\de^2, &\quad\qquad\quad& [\ga,\de]&=&p\ga^2 \\[2mm]
[\al,\de]&=&p(\al-\de)\ga, &\quad\qquad\quad& [\be,\ga]&=&-p\al\ga-p\ga\de
\end{array}
$$
and
$$
\al\de-\be\ga-p\al\ga=1
$$
Notice
that $p$ plays the role of $h$ in the standard QFA, $\slqs$
 and can be interpreted
either as a number or as an indeterminate. We retrieve the commutative
function algebra of  $\rSL_2(\C)$ by setting $p=0$.

This algebra is endowed with an Hopf algebra structure by 
defining $\Delta$,  $\epsilon$, $S$ as (see for example \cite{Manin,Herranz}):
$$
\begin{array}{rcl}
\Delta \begin{pmatrix} \al&\be\\\ga&\de\end{pmatrix}&=&\begin{pmatrix} \al&\be\\\ga&\de\end{pmatrix}\otimes\begin{pmatrix} \al&\be\\\ga&\de\end{pmatrix}\\[5mm]
\epsilon\begin{pmatrix} \al&\be\\\ga&\de\end{pmatrix}&=&\begin{pmatrix} 1&0\\0&1\end{pmatrix}\\[5mm]
S\begin{pmatrix} \al&\be\\\ga&\de\end{pmatrix}&=&\begin{pmatrix} \de-p\ga&-\be+p(\al-\de)+p^2\ga\\-\ga&\al+p\ga\end{pmatrix}
\end{array}
$$
where again we use the matrix notation for the generators.

\subsection{The duality between QUEA and QFA}

An element of the universal enveloping algebra of a Lie algebra $\fg$
is naturally viewed as a differential operator on the algebra of function
on the Lie group G whose Lie algebra is $\fg$. This establishes
a perfect duality between the universal enveloping algebra and the algebra
of functions. This pairing naturally generalizes to give
a non degenerate pairing between QUEA's and QFA's.

\begin{proposition}\label{dual-prop}
There is a family of non degenerate dual pairings 
$\langle \bullet,\bullet \rangle$, parametrized by a constant 
$\varepsilon$, given by
$$
\langle q^{\bh}
,a\rangle=q^{\frac 1 2}, \quad
\langle q^{\bh},d\rangle=q^{-\frac 1 2}, \quad \langle e_+,b\rangle=
q^{\varepsilon+\frac 1 2}
,\qquad \langle e_-,c\rangle=q^{-\varepsilon-\frac 1 2} 
$$
yielding a perfect duality between $\Us$ and $\Ss$.
\end{proposition}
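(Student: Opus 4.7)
The plan is to extend the given generator-pairings to a full bilinear form $\langle\cdot,\cdot\rangle : \Us \times \Ss \to \C$ using the Hopf bilinearity axioms
\begin{equation*}
\langle uv, f\rangle = \langle u\otimes v, \Delta f\rangle, \qquad \langle u, fg\rangle = \langle \Delta u, f\otimes g\rangle,
\end{equation*}
declaring the pairing to vanish on all unspecified generator pairs and to satisfy the counit axioms $\langle u,1\rangle = \epsilon(u)$, $\langle 1,f\rangle = \epsilon(f)$. With such an extension one must verify (a) it descends to the quotients by the defining relations on both sides, (b) it is compatible with the antipodes, and (c) it is non-degenerate.

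The main obstacle is well-definedness. Concretely, each Manin relation in (\ref{manin-rels}) and the quantum determinant relation (\ref{qdet}) must pair to zero against every generator $q^{\pm\bh}, e_\pm$ of $\Us$, and dually each relation in (\ref{dj-rel}) must pair to zero against every generator $a,b,c,d$ of $\Ss$. An induction on word length using bilinearity reduces each of these to a finite check on generators, carried out from (\ref{dj-rel-coalgebra}) and the matrix coproduct of $\Ss$. For example, $\langle e_+, ab - q^{-1}ba\rangle$ is computed by expanding $\Delta(ab)$ and $\Delta(ba)$ and applying $\Delta(e_+) = e_+\otimes q^\bh + q^{-\bh}\otimes e_+$. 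The parameter $\varepsilon$ enters only as the common scaling $e_\pm \mapsto q^{\pm(\varepsilon + 1/2)}e_\pm$, which preserves every defining relation of $\Us$; this transparently accounts for the one-parameter family. Antipode compatibility $\langle Su,f\rangle = \langle u, Sf\rangle$ is then automatic from the bialgebra pairing axioms, but I would cross-check it on generators using (\ref{dj-rel-antipode}) for reassurance.

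For non-degeneracy, I would fix PBW-type bases: $\{e_-^m\, q^{n\bh}\, e_+^k : m,k\ge 0,\, n\in\Z\}$ for $\Us$ and, after eliminating $ad$ via the quantum determinant, $\{a^i b^j c^k\}\cup\{b^j c^k d^l : l\ge 1\}$ for $\Ss$. The weight grading induced by $q^{\bh}$ together with the $q$-Leibniz form of the coproducts forces the pairing $\langle e_-^m\, q^{n\bh}\, e_+^k,\; a^i b^j c^r d^s\rangle$ to vanish unless $j = k$ and $r = m$, while producing a computable nonzero leading coefficient (essentially a $q$-factorial times a power of $q^{1/2}$) when these matches occur. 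The resulting block-triangular form of the pairing matrix on each bigraded component is invertible, giving non-degeneracy on both sides and completing the proof.
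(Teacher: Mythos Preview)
Your proposal is correct and follows exactly the standard textbook argument that the paper invokes: the paper's own proof is simply a citation to \cite{ks} p.~120 and \cite{kassel} p.~152 together with the words ``and explicit calculation,'' and what you have written is a faithful outline of that explicit calculation (Hopf-pairing axioms, finite well-definedness checks against the relations, antipode compatibility, and non-degeneracy via PBW bases and the block-triangular pairing matrix). There is nothing to add.
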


\begin{proof}
See \cite{ks} pg 120 and \cite{kassel} pg 152 and explicit calculation.
\end{proof}
Notice that this duality may be extended to higher dimensions.\\[3mm]

A similar duality is found also for the nonstandard case.

\begin{proposition}\label{dual-prop}
The dual pairing 
 $$
 \begin{array}{rclcrclcrcl}
\langle h,\al\rangle&=&1, &\langle h,\de\rangle&=&-1,&\langle y,\ga\rangle&=&1,\\[2mm] 
\langle t,\al\rangle&=&1, &\langle t,\be\rangle&=&p,&\langle t,\de\rangle&=&1, \\[2mm]
\langle t^{-1},\al\rangle&=&1, &\langle t^{-1},\be\rangle&=&-p,&\langle t^{-1},\de\rangle&=&1 
\end{array}
$$
produces a duality between $\Uj$ and $\Sj$.
\end{proposition}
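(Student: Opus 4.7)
The plan is to extend the pairing given on generators to all of $\Uj \times \Sj$ by using the Hopf algebra duality axioms, verify that the resulting bilinear form respects the defining relations on both sides, and finally establish non-degeneracy by comparing PBW-type bases.

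First, I would use the Hopf-duality axioms
$\langle u_1 u_2, a\rangle = \langle u_1 \otimes u_2, \Delta(a)\rangle$,
$\langle u, a_1 a_2\rangle = \langle \Delta(u), a_1 \otimes a_2\rangle$,
$\langle 1, a\rangle = \epsilon(a)$, and $\langle u, 1\rangle = \epsilon(u)$,
together with the values prescribed in the statement on the pairs $(h,\al), (h,\de), (y,\ga), (t^{\pm 1},\al), (t^{\pm 1},\be), (t^{\pm 1},\de)$ and the value zero on every other generator pair. These rules determine $\langle u, a\rangle$ on arbitrary monomials inductively. Compatibility of the antipode then follows, as in the proof for the standard case referenced before Proposition 3.2, from the uniqueness of the antipode of the Hopf-dual.

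Next, I would verify well-definedness. On the $\Uj$ side, this means checking that the pairing annihilates every element of the form $[h,t]-(t^2-1)$, $[y,t]+\tfrac{p}{2}(ht+th)$, $[h,y]+\tfrac{1}{2}(yt+ty+yt^{-1}+t^{-1}y)$ when evaluated against the generators of $\Sj$; concretely one expands the left factor via $\Delta$ of the right element. Symmetrically, one must check that the pairing annihilates each $\Sj$-relation such as $[\al,\be]-p\al^2+p$, $[\al,\ga]+p\ga^2$, $\dots$, $\al\de-\be\ga-p\al\ga-1$, when tested against $h$, $y$ and $t^{\pm 1}$; this uses the coproducts $\Delta(h)=h\otimes t+t^{-1}\otimes h$, $\Delta(y)=y\otimes t+t^{-1}\otimes y$, $\Delta(t)=t\otimes t$, and the computed values on degree-two words.

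Once well-definedness is established, non-degeneracy follows by fixing PBW-type bases, e.g.\ ordered monomials $h^i y^j t^k$ on the QUEA side and $\al^i\be^j\ga^k$ (modulo the quantum-determinant relation eliminating $\de$) on the QFA side, and showing that the resulting pairing matrix is triangular with non-zero diagonal. Indeed, setting $p=0$ recovers the classical perfect pairing between $\cU(\rsl_2)$ and $\cO(\rSL_2(\C))$, and the $p$-corrections only add a strictly triangular part that cannot destroy invertibility at each fixed total degree.

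The main obstacle is the second step: unlike the standard case, where the $q$-deformed coproducts are still linear in each generator, the Jordanian coproducts and Jordanian QFA relations are genuinely nonlinear (witness the appearance of $\sinh(\sigma/2)$ and the quadratic right-hand sides such as $p\al^2-p$ and $-p\al\ga-p\ga\de$). Verifying that the prescribed values on generators extend consistently therefore requires a careful bookkeeping of degree-two pairings $\langle u_1 u_2, \al\be\rangle$ etc., which is what \cite{ks,kassel} and the explicit computation alluded to by the authors provide.
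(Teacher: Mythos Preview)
Your proposal is correct and follows exactly the approach that the paper has in mind: the paper's own proof consists entirely of the sentence ``This is an explicit calculation, see for example \cite{Ohn}'', and what you have outlined --- extending the pairing via the Hopf-duality axioms, checking that the defining relations of each side lie in the annihilator of the other, and arguing non-degeneracy by a PBW/triangularity argument that degenerates to the classical pairing at $p=0$ --- is precisely what that explicit calculation entails. Your sketch is in fact considerably more detailed than the paper's treatment; the only point that would require more care in a full write-up is the non-degeneracy step, where the claim that the $p$-corrections are strictly triangular needs a fixed ordering convention and an inductive degree argument, but this is standard and is carried out in the reference \cite{Ohn}.
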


\begin{proof}
This is an explicit calculation, see for example  \cite{Ohn}.
\end{proof}

\section{Real forms of QUEA's and QFA's} \label{realforms}

In this section we examine the real forms of the QUEA's and QFA's
discussed in the previous section, together with their corresponding
dualities.

\subsection{Real forms of quantum groups} 

A real form of a quantum group ({\it e.g.} a QUEA or QFA),
is understood as a complex Hopf algebra together
with a \textit{star structure} (see \cite{cp} 4.1) and \cite{ks} Ch. 1).

\begin{definition}
An \textit{Hopf star algebra} is an complex Hopf algebra $\mathcal{H}$  
together with a \textit{star structure},
that is a antilinear unitary antiautomorphism $*$,
which is also an involution and satisfies:
$$ 
\Delta(a^*)=\Delta(a)^*, \quad S(S(a^*)^*)=a, \quad \ep(a^*)=\overline{\ep(a)}
$$
We will denote such Hopf star algebra by the pair $(\mathcal{H},*)$ and
by an abuse of terminology we shall call it a \textit{real form} of $\cH$.
Also, with a slight abuse of terminology, we will call star structure
any antilinear antiinvolution on a complex algebra. 
\end{definition}

This definition agrees with the one in \cite{ks} pg 20, though
it must be noticed that the last two properties are a consequence
of the other requirements (see Prop. 10 pg 20 in \cite{ks}).

\medskip
The definition of star structure on 
an Hopf algebra $\cH$ involves the coalgebra structure of $\cH$,
hence it may well be that isomorphic real forms of a certain Lie algebra
$\fg$ yield different ({\it i.e.} nonisomorphic) real forms of 
the corresponding universal enveloping algebra $\cU(\fg)$; 
in other words the same complex universal enveloping algebra, viewed
as an Hopf algebra, may
have two nonisomorphic star structures, corresponding to the same
real form of $\fg$. This will then give non isomorphic QUEA
star Hopf algebras, corresponding to the same real form of $\fg$.

For example, 
in 3.1 \cite{luk}, the authors present four non isomorphic bialgebras
which are real forms of $\rsl_2(\C)$: three are the
classical standard bialgebras $\rsu(2)$,  $\rsu(1,1)$, $\rsl_2(\R)$,
while the fourth one is a non standard bialgebra structure on 
$\rsl(2)$. 
Notice that, 
while $\rsu(1,1) \cong \rsl_2(\R)$ as Lie algebras, they 
are non isomorphic as bialgebras (for the bialgebra structure 
see \cite{luk}). Consequently
the star structures on their
universal enveloping algebras and the deformations
 QUEA constructed in Sec. 4 of \cite{luk},
corresponding to these bialgebra structures, will yield non isomorphic
Hopf star algebras.

\begin{definition}
We say that two Hopf star algebras $(\mathcal{H},*)$ and ($\mathcal{F}$, $\star$) are in perfect duality if there exists a
pairing $\langle \bullet, \bullet\rangle$ 
yielding a perfect duality between $\mathcal{H}$ and $\mathcal{F}$ and such that
$$
\langle x^*,f\rangle=\overline{\langle x,S(f)^\star\rangle}, \qquad
\forall x\in \mathcal{H}, \quad\forall f\in\mathcal{F}.
$$  
\end{definition}

\label{stars}

\subsection{Real forms of Standard and Jordanian QUEA's
and QFA's}\label{realf-sec}

In the language of the previous subsection, we shall list several
star structures for the QUEA deformations $\Uqs$, $\Uj$
and their corresponding QFA deformations $\slqs$, $\slpj$, together
with their dualities, introduced in Sec. \ref{prelim}. We leave to
the reader the tedious checks involved in all of our statements.

\medskip
{\bf Standard deformations}. 
For each real form of $\rsl_2(\C)$,
we list the real QUEA and QFA, as Hopf star algebras,
and the perfect duality between them.
  
\begin{itemize}
\item $\rSU(2)$, the special unitary group.
$$
\begin{array}{c}
\hbox{QUEA} \quad
\cU_q^{(s)}\hspace{-0.5mm}(\rsu(2)):=(\Us,*_{\hspace{-0.8mm}
\textrm{\tiny{e}}})\,\,
\textrm{with} \,\,
(q^{\bh})^{\se}=q^{\bh}, \quad, e_{\pm}^{\se}=e_{\mp}, \, q\in\mathbb{R}\\ \\
\hbox{QFA} \quad
\rSU^{\hspace{-0.2mm}(s)}_q\hspace{-0.6mm}(2):=(\Ss,\sre)\,\,
\textrm{with} \,\,
\begin{pmatrix} a&b\\c&d\end{pmatrix}^{\sre}=
S\begin{pmatrix} a&c\\b&d\end{pmatrix},\quad q\in\mathbb{R}
\end{array}
$$
Dual pairing 
$$
\langle q^{\bh}
,a\rangle=q^{\frac 1 2}, \quad
\langle q^{\bh},d\rangle=q^{-\frac 1 2}, \quad \langle e_+,b\rangle=1,\qquad \langle e_-,c\rangle=1
$$

\item $\rSU(1,1)$.
$$
\cU_q^{(s)}\hspace{-0.5mm}(\rsu(1,1)):=(\Us,\sk)\,,\,\,\,\,\,\,
\textrm{with}\,\,\,\,
(q^{\bh})^{\sk}=q^{\bh}\,,\,\,\,\,\,\, e_{\pm}^{\sk}=-e_{\mp}\,;\,\,\,\,\,\,\,\,\,q\in\mathbb{R}
$$
$$\rSU^{\hspace{-0.2mm}(s)}_q\hspace{-0.6mm}(1,1):=(\Ss,\srk)\,,\,\,\,\,\,\,\textrm{with} 
\begin{pmatrix} a&b\\c&d\end{pmatrix}^{\srk}=\begin{pmatrix} d&qc\\q^{-1}b&a\end{pmatrix};\,\,\,\,\,\,\,\,\,q\in\mathbb{R}
$$
Dual pairing 
$$
\langle q^{\bh}
,a\rangle=q^{\frac 1 2}, \quad
\langle q^{\bh},d\rangle=q^{-\frac 1 2}, \quad \langle e_+,b\rangle=1,\qquad \langle e_-,c\rangle=1
$$

\item $\rSL_2(\R)$ the real special linear group. 
$$\cU_q^{(s)}\hspace{-0.5mm}((\rsl_2(\mathbb{R})):=(\Us,\sts)\,,\,\,\,\,\,\,
\textrm{with}\,\,\,\,
(q^{\bh})^{\sts}=q^{\bh}\,,\,\,\,\,\,\, e_{\pm}^{\sts}=-e_{\pm}\,;\,\,\,\,\,\,\,\,\,q\bar{q}=1
$$
$$\rSL^{\hspace{-0.2mm}(s)}_q\hspace{-0.6mm}(2,\mathbb{R}):=(\Ss,\srs)\,,\,\,\,\,\,\,\textrm{with} 
\begin{pmatrix} a&b\\c&d\end{pmatrix}^{\srs}=\begin{pmatrix} a&b\\c&d\end{pmatrix}\,;\,\,\,\,\,\,\,\,\,q\bar{q}=1
$$
Dual pairing 
$$
\langle q^{\bh}
,a\rangle=q^{\frac 1 2}, \quad
\langle q^{\bh},d\rangle=q^{-\frac 1 2}, \quad \langle e_+,b\rangle=q^{\frac 1 2},\qquad \langle e_-,c\rangle=q^{-\frac 1 2}
$$
\end{itemize}
\begin{remark} Notice that in this case we could equivalently 
define the involution to be multiplicative instead of being 
antimultiplicative by forcing $q$ to be real. 
\end{remark}

\medskip
{\bf Jordanian deformation}. 
In this case there is only one real structure 
and it corresponds to the real form $\rsl(2)(\R)$. 
$$
\begin{array}{rcl}
\cU_p^{(j)}\hspace{-0.5mm}((\rsl_2(\R))&:=&(\Uj,*)\,,\,
\textrm{with}\\[3mm]
(t)^*&=&t\,,\,\,\, h^*=-h\,,\,\,\, y^*=-y\,,\,\,\,\bar{p}=-p
\end{array}
$$
$$
\begin{array}{rcl}\rSL^{\hspace{-0.2mm}(j)}_p\hspace{-0.6mm}(2,\mathbb{R})&:=&(\Sj,\star)\,,\,\textrm{with} \\[3mm]
\begin{pmatrix} \al&\be\\ \ga&\de\end{pmatrix}^\star&=&\begin{pmatrix} \al+p\ga&\be-p\al-p^2\ga+p\de  \\\ga&\de-p\ga\end{pmatrix}=S\begin{pmatrix}\de&-\be\\ -\ga&\al\end{pmatrix}\,;\,\,\,\,\,\,\,\,\,\bar{p}=-p
\end{array}
$$
Dual pairing: see Prop. \ref{dual-prop}.

\section{The $q$-linked complex 4-d special orthogonal group, and its real forms}

We are now interested in defining the most general 
deformation of the complex 4-d orthogonal group $\mathrm{SO}(4,\C)$, 
together with its real forms.

\label{qsog}\subsection{Quantum special
orthogonal group}

We proceed as we did in the previous sections and consider
quantum deformations of the special
orthogonal group in four dimensions 
through the QUEA and QFA approaches establishing
a duality between them.

\medskip
{\bf QUEA deformations.}
The identification $\mathrm{so}(4,\C) \cong 
\mathrm{sl}_2(\C) \oplus \mathrm{sl}_2(\C)$ allows us to define
immediately the following three quantum deformations:
\beq\label{def-so4}
\begin{array}{c} \cU_{q,q'}(\rsl_2^{(s)} \oplus \rsl_2^{(s)}), \qquad
\cU_{q,q'}(\rsl_2^{(s)} \oplus \rsl_2^{(j)}), \qquad
\cU_{q,q'}(\rsl_2^{(j)} \oplus \rsl_2^{(j)}).
\end{array}
\eeq
We refer to the left and right summand respectively as 
{\sl chiral and antichiral sectors}. In our notation, 
$\cU_{q,q'}(\rsl_2^{(s)} \oplus \rsl_2^{(s)})$ is the Hopf algebra generated by 
$e_\pm$, $q^{\pm \bh}$ and $e'_{\pm}$, $q'^{\pm \bh'}$ in the chiral and antichiral sector respectively, subject to the relations (\ref{dj-rel}) for each copy,
where the Hopf algebra structure is given by two copies of 
(\ref{dj-rel-coalgebra}), (\ref{dj-rel-antipode}) and the algebraic structure is simply obtained by allowing the elements of the two sectors to commute. The definition of
$\cU_{q,q'}(\rsl_2^{(s)} \oplus \rsl_2^{(j)})$ and
$\cU_{q,q'}(\rsl_2^{(j)} \oplus \rsl_2^{(j)})$ is similar, taking 
(\ref{j-rel}) instead of (\ref{dj-rel})
when the index $(j)$ occurs (and also the corresponding Hopf algebra
jordanian structure).

This construction is not the most general one. In fact, following the spirit of 
\cite{tv}, one can try to deform the \emph{link} between the chiral
and the antichiral sectors. This essentially amounts to deform the coalgebra structure 
(equivalently the algebra structure in the QFA approach) 
by mixing the chiral and antichiral sectors. 
This was already noticed in \cite{luk}, 
where the authors remarked that in order to exhaust all possible bialgebra 
structures it is necessary to consider the mixed terms in the 
classical $r$-matrix, that is those terms belonging to 
$\mathrm{sl}_2(\C)\wedge \mathrm{sl}_2(\C)$. 
These terms produce 
two Jordanian bialgebras $\Uj$ intertwined by an Abelian twist 
and the deformation of $\mathrm{so}_4(\C)$ twisted by the Belavin-Drinfeld triple.

\medskip
In the following we will focus our attention on the so called 
standard-standard case $\cU_{q,q'}(\rsl_2^{(s)} \oplus \rsl_2^{(s)})$. 
(refer to \cite{luk}).

\begin{definition}
We define the {\it q-linked complex 4-d special orthogonal group} (QUEA)
as the complex $\cU_{q,q'}(\rsl_2^{(s)} \stackrel{\ell}
\oplus \rsl_2^{(s)})$ associative algebra generated by
$e_\pm$, $q^{\pm \bh}$ and $e'_{\pm}$, $q'^{\pm \bh'}$ subject to the 
algebraic relations (see (\ref{dj-rel})). 
The coalgebra structure given by:
$$
\begin{array}{rcl}
\Delta(e_\pm)&=&e_{\pm} \otimes q^{\bh}\ell^{\pm \bh'}+ q^{-\bh}\ell^{\mp \bh'} \otimes e_{\pm}\\[3mm]
\Delta(e'_\pm)&=&e'_{\pm} \otimes q'^{\bh'}\ell^{\mp \bh}+ q'^{-\bh'}\ell^{\pm \bh} \otimes e'_{\pm}\\[3mm]
\Delta(q^{\pm \bh})&=&q^{\pm \bh} \otimes q^{\pm \bh}\\[3mm]
\Delta(q'^{\pm \bh'})&=&q'^{\pm \bh'} \otimes q'^{\pm \bh'}
\end{array}
$$
As for the antipode, we have (\ref{dj-rel-antipode}) for each of the chiral
and antichiral sectors, that is: 
$$
S(e_{\pm })=-q^{\pm 1}e_{\pm},\quad S(q^{\pm \bh})=q^{\mp \bh}, \qquad
S(e'_{\pm })=-q'^{\pm 1}e'_{\pm},\quad S(q'^{\pm \bh})=q'^{\mp \bh'}
$$
\end{definition}
Notice that the link parameter $\ell$ occurs only in the deformation
of the coalgebra structure.

\bigskip
{\bf QFA deformation}. We now take a dual approach and provide
a deformation of the function algebra of the special orthogonal group
in four dimensions.
Inspired by \cite{tv}, we give the following definition.

\begin{definition}
We define \textit{$q$-linked complex 4-d special orthogonal group} (QFA) 
denoted by $\rSL_q^{\hspace{-0.6mm}(s)}
\hspace{-0.6mm}(2,\C)\stackrel{\ell}{\oplus} 
\rSL_{1/q'}^{\hspace{-0.6mm}(s)}\hspace{-0.4mm}(2,\C)$ as the associative
complex algebra generated by the indeterminates $a$, $b$, $c$, $d$
and $a'$, $b'$, $c'$, $d'$ subject to the Manin relations (see (\ref{manin-rels}))
and to the extra {\sl link} relations among the two sectors:
\beq\label{linkr}
\begin{array}{rclccrcl}
b'a&=&\ell ab'&\,\,\,\,\,&ac'&=&\ell c'a\\[2mm]
a'b&=&\ell ba' &&bd'&=&\ell d'b\\[2mm]
ca'&=&\ell a'c& &d'c&=&\ell cd'\\[2mm]
db'&=&\ell b'd&&c'd&=&\ell  dc'\\[2mm]
aa'&=& a'a&\,\,\,\,\,&ad'&=&d'a\\[2mm]
bb'&=& b'b &&bc'&=& c'b\\[2mm]
cb'&=& b'c& &cc'&=&c'c\\[2mm]
da'&=&a'd&&dd'&=&  d'd
\end{array}
\eeq

When $q=\frac{1}{q'}$ we will also refer to it as the $q$-linked complex Lorentz group or complex Lorentz quantum group.
\begin{remark}We observe that the left and chiral determinants $\det:=ad-q^{-1}bc$ and $\det':=a'd'-q'b'c'$ are again central, and that consistency check is crucial.
\end{remark}

\end{definition}
With a common abuse of notation we denote by:
$$
(x_{ij})=\renewcommand\arraystretch{0.5}\begin{pmatrix} a&b\\c&d\end{pmatrix}
$$ 
the set of the 4 generators $a$, $b$, $c$, $d$ and we call $(x_{ij})$ a $q$
matrix, meaning that such generators satisfy the Manin relations with 
parameter $q$. Similarly we denote by:
$$
(y_{ij})=\renewcommand\arraystretch{0.5}\begin{pmatrix} a'&b'\\c'&d'\end{pmatrix}
$$ 
and we notice that $(y_{ij})$ is a $(q')^{-1}$ matrix.

\bigskip
As expected the two approaches QUEA and QFA to the
deformations of the complex special orthogonal group yield Hopf algebras in 
perfect duality.

\begin{proposition}\label{dual}
The dual paring 
$$
\begin{array}{rclccrclcrclcrcl}
\langle q^{\bh}
,a\rangle&=&q^{\frac 1 2},& \quad
&\langle q^{\bh},d\rangle&=&q^{-\frac 1 2},&\,& \langle e_+,b\rangle&=&1,&\,& \langle e_-,c\rangle&=&1,\\[5mm]
\langle q'^{\bh'}
,a'\rangle&=&q'^{-\frac 1 2},& \,
&\langle q'^{\bh'},d'\rangle&=&q'^{\frac 1 2},& \,& \langle e'_+,c'\rangle&=&1,&\,& \langle e'_-,b'\rangle&=&1,\\[5mm]
\langle q^{\bh}
,a'\rangle&=&1,& \quad
&\langle q^{\bh},d'\rangle&=& 1,&\,&\langle q'^{\bh'}
,a\rangle&=&1 ,& \,
&\langle q'^{\bh'},d\rangle&=&1\\[5mm]
\langle \ell^{\bh}
,a\rangle&=&\ell^{\frac 1 2},& \quad
&\langle \ell^{\bh},d\rangle&=&\ell^{-\frac 1 2},&\,& \langle \ell^{\bh}
,a'\rangle&=&\ell^{-\frac 1 2},& \quad
&\langle \ell^{\bh},d'\rangle&=&\ell^{\frac 1 2}\\[5mm]
\langle \ell^{\bh}
,a'\rangle&=&1,& \quad
&\langle \ell^{\bh},d'\rangle&=& 1,&\,&\langle \ell^{\bh'}
,a\rangle&=&1 ,& \,
&\langle \ell^{\bh'},d\rangle&=&1
\end{array}
$$
yields a perfect duality between the 
Hopf algebras $\cU_{q,q'}(\rsl_2^{(s)} \stackrel{\ell}
\oplus \rsl_2^{(s)})$ 
and $\rSL_q^{\hspace{-0.6mm}(s)}\hspace{-0.6mm}(2,\C)\stackrel{\ell}{\oplus} 
\rSL_{1/q'}^{\hspace{-0.6mm}(s)}\hspace{-0.4mm}(2,\C)$.
\end{proposition}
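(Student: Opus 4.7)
The plan is to extend $\langle\cdot,\cdot\rangle$ from the tabulated values on generators to the whole algebras by the standard Hopf pairing recipe:
\begin{equation*}
\langle xy,f\rangle=\langle x\otimes y,\Delta f\rangle, \qquad \langle x,fg\rangle=\langle \Delta x, f\otimes g\rangle,
\end{equation*}
together with $\langle 1,f\rangle=\epsilon(f)$ and $\langle x,1\rangle=\epsilon(x)$. Well-definedness then reduces to checking that every defining relation on each side pairs to zero with every element of the opposite side. Once this is in hand, the remaining Hopf pairing axioms (compatibility with $\epsilon$, $\Delta$, $S$) are read off directly from the tables and from formulas (\ref{dj-rel-coalgebra}) and (\ref{dj-rel-antipode}).

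The intrasector relations are disposed of by invoking Proposition \ref{dual-prop}: restricted to either the chiral or the antichiral sector alone, the tabulated pairing specializes (up to the free constant $\varepsilon$) to the standard QUEA--QFA pairing of Section 3, so the Manin relations (\ref{manin-rels}) on each copy of generators and the relations (\ref{dj-rel}) on each copy of $\Uqs$ are respected automatically. The genuinely new verifications are two: (i) the eight link relations (\ref{linkr}) must pair to zero against every element of $\cU_{q,q'}(\rsl_2^{(s)}\stackrel{\ell}{\oplus}\rsl_2^{(s)})$, and (ii) the cross-sector commutativity of the QUEA generators must pair to zero against every element of the QFA.

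For (i) it suffices, by the recipe above, to test on a generating set of the QUEA. The group-like generators $q^{\pm\bh}, q'^{\pm\bh'}, \ell^{\pm\bh}, \ell^{\pm\bh'}$ act as algebra characters on the QFA, so that $\langle g, b'a-\ell ab'\rangle=0$ etc. becomes a numerical identity dictated by the tabulated values; for the twisted primitives $e_\pm, e'_\pm$, the factors $\ell^{\pm\bh}$ and $\ell^{\pm\bh'}$ inserted into $\Delta(e_\pm)$, $\Delta(e'_\pm)$ produce exactly the compensating powers of $\ell$ matching the link constant on the QFA side. For (ii), each commutator $[x,x']$ of opposite-sector QUEA generators has the property that $\Delta[x,x']$ is supported on tensors with one leg in each sector, and the pairing of such a tensor against a single-sector Manin relation vanishes by the intrasector case together with the shape of the cross-sector pairings (in particular $\langle q^{\bh},a'\rangle=1=\langle q^{\bh},d'\rangle$).

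Non-degeneracy is then obtained by a bigrading argument: both sides are bigraded by (chiral, antichiral) degree, any nonzero element has a nonzero bihomogeneous component, and that component pairs nontrivially with an appropriately chosen bihomogeneous element on the opposite side by applying Proposition \ref{dual-prop} sector by sector. The main obstacle is the bookkeeping in step (i): the interplay of the parameters $q,q',\ell$ inside the twisted coproducts of $e_\pm$ and $e'_\pm$ has to conspire to deliver precisely the factor $\ell$ on the QFA side for each of the eight link relations, and getting the exponents of $\bh$ and $\bh'$ correctly signed in all eight cases is where the calculational patience is spent; everything else is a mechanical consequence of Proposition \ref{dual-prop} and the coproduct formulas.
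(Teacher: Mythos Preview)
Your approach is essentially the same as the paper's: both verify well-definedness by reducing the intrasector checks to the known single-copy pairing (Proposition~\ref{dual-prop}) and then treating the genuinely new link relations by hand, exploiting that the $\ell^{\pm\bh'}$ and $\ell^{\pm\bh}$ factors in the twisted coproducts of $e_\pm,e'_\pm$ supply exactly the needed power of $\ell$. The paper carries out one such sample computation explicitly (pairing $e_+q^{n\bh}$ against $ba'$ versus $a'b$) where you describe the mechanism conceptually; the content is the same.

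One small point of difference worth noting: for non-degeneracy the paper argues by showing directly that $\det\cdot\det'$ pairs with every QUEA generator exactly as the counit does, so that the pairing descends to, and is perfect on, the quotient by $\det=1=\det'$. Your bigrading sketch is a reasonable alternative, but as written it does not isolate this step; since you invoke Proposition~\ref{dual-prop} (which is already at the $\rSL$ level) the determinant relation is implicitly handled, but making that explicit would tighten the argument. Also, your phrasing in (ii) that ``$\Delta[x,x']$ is supported on tensors with one leg in each sector'' is not quite the right justification: what actually makes the cross-sector commutativity check go through is that the QFA comultiplication keeps chiral and antichiral generators in separate sectors, so that in $\sum\langle x,f_{(1)}\rangle\langle x',f_{(2)}\rangle$ the two factors can be swapped without cost. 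This is a wording issue, not a mathematical one.
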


\begin{proof}
We first show that the algebraic and coalgebraic structures of $\cU_{q,q'}(\rsl_2^{(s)} \stackrel{\ell}
\oplus \rsl_2^{(s)})$ 
and $\rSL_q^{\hspace{-0.6mm}(s)}\hspace{-0.6mm}(2,\C)\stackrel{\ell}{\oplus} 
\rSL_{1/q'}^{\hspace{-0.6mm}(s)}\hspace{-0.4mm}(2,\C)$ are consistent with the requirement 

$$
\langle x, fg\rangle=\sum \langle x_{(1)},f\rangle \langle x_{(2)},g\rangle
$$
and
$$
\langle xy,f\rangle=\langle x,f_{(1)}\rangle \langle y,f_{(2)}\rangle
$$
for every $x,y\in\cU_{q,q'}(\rsl_2^{(s)} \stackrel{\ell}
\oplus \rsl_2^{(s)})$ and $f,g\in\rSL_q^{\hspace{-0.6mm}(s)}\hspace{-0.6mm}(2,\C)\stackrel{\ell}{\oplus} 
\rSL_{1/q'}^{\hspace{-0.6mm}(s)}\hspace{-0.4mm}(2,\C)$, where in the  Sweedler notation we have $\Delta(x)=\sum x_{(1)}\otimes x_{(2)}$. The most involved case in when we look at the algebra structure on the QFA side and the coalgebra 
structure on the QUEA one, because they are modified by the link parameter. Let us start looking at the chiral (equivalently antichiral) sector only; note that

$$
\langle q^{\bh}\ell^{\pm \bh'},f\rangle=\langle q^{\bh},f\rangle
$$
thus, in this case, the extra terms proportional to $\ell$ in the coproduct  of $e_{\pm}$ do not contribute and the proof of our claim naturally follows by inspection. Observe then that since $b$ and $c$ have vanishing pairing with the antichiral generators on the enveloping algebra side, the other relations are trivial. The situation is obviously symmetric thus we are left to prove the consistency of the link relations (\ref{linkr}) with the pairing proposed. This must be done case by case, as sample calculation we check the case of the link relation $ba'$. The only non trivial case is when we pair it with the term $e_+q^{n\bh}$ for any $n\in\mathbb{Z}$. In particular one gets
$$
\langle e_+q^{n\bh},ba'\rangle=\langle e_+q^{n\bf h},b\rangle \langle q^{(n+1)\bh}\ell^{\bf h'},a'\rangle=q^{-\frac n 2}\ell^{-\frac 1 2}
$$
while
$$
\langle e_+q^{n\bh},a'b\rangle=\langle q^{-\bh}\ell^{-\bf h'}q^{n\bh},a'\rangle\langle e_+q^{n\bf h},b\rangle=\ell^{\frac 1 2}q^{-\frac n 2 }
$$ 
proving then that $a'b=\ell ba'$ as claimed. The paring of the form $\langle xy,f\rangle$ trivially satisfies the required condition when we work on the chiral or antichiral sector only. When on the left hand side we have a mixed term, the only non vanishing pairing is when $xy=q^{n\bh}q'^{m\bh'}$ for some $n, m\in\mathbb{Z}$ and $f=a \,\,\textrm{or} \,\,d$ (equivalently $f=a' \,\,\textrm{or} \,\,d'$). In those cases one can check that the duality holds, and using the same arguments the proof can be extended to the whole algebra. We have then constructed a duality of algebras, but we have indeed something more; in fact it is easy to check on generators that 
$$
\langle S(x),f\rangle=\langle x,S(f)\rangle
$$
thus we got a duality of Hopf algebras. Note then that so far we have used only the Manin and link relations; we observe that the chiral and antichiral determinants $\det$ and $\det'$ commute and according to our rules, we have
$$
\langle e'_{\pm}, \det \det\nolimits{'} \rangle=\langle e_{\pm}, \det \det\nolimits{'} \rangle=0=\epsilon(e_{\pm})=\epsilon(e'_{\pm})
$$
while
$$
\langle q'^{\bf h'}, \det \det\nolimits{'}\rangle=\langle q^{\bf h}, \det \det\nolimits{'}\rangle=1=\epsilon(q^{\bf h})=\epsilon(q'^{\bf h'})
$$
implying that the duality is perfect only when $\det=1=\det'$, that is the case of $\rSL_q^{\hspace{-0.6mm}(s)}\hspace{-0.6mm}(2,\C)\stackrel{\ell}{\oplus} 
\rSL_{1/q'}^{\hspace{-0.6mm}(s)}\hspace{-0.4mm}(2,\C)$.

\end{proof}

\subsection{Real forms}
We are now interested in the real forms the quantum complex Lorentz group
defined through the QUEA and QFA approaches on the previous section.

\medskip
Let us examine first the case in which the star structure on the quantum complex
Lorentz group (both in the QUEA and QFA approaches)
does not mix the chiral and antichiral sectors. Let $\sre$ denote the star
structure on $\rSL_q^{\hspace{-0.6mm}(s)}(2,\C)$,
whose fixed points correspond to the 
real Euclidean quantum group and $\srk$ denote the star
structure on $\rSL_q^{\hspace{-0.6mm}(s)}(2,\C)$, 
whose fixed points correspond to the 
real Klein quantum group 
(see \cite{flm} for more details on such star structures).
We have six possibilities: the combination $(\sre\oplus \sre)$ 
yields the deformation of the Euclidean group $\rSO_{q,q'}^{\,\,\,\ell}(4)$, 
the triplet $(\srs\oplus \srs)$, $(\srk\oplus \srk)$ and $(\srk\oplus \srs)$ 
yields three different deformations of the Klein group we denote $\rSO_{q,q'}^{\,\,\,\ell}(2,2)$, $\rSO_{\,q,q'}^{'\,\,\,\ell}(2,2)$ and $\rSO_{\,q,q'}^{''\,\,\ell}(2,2)$ respectively. In the end 
the pairs $(\sre\oplus \srs)$ and $(\sre\oplus \srk)$ are related to the 
so called {\sl quaternionic symmetry}; we decide to disregard those
last two real forms since, 
even in the classical case, they do not correspond to interesting comodule 
structures with a natural geometrical interpretation. 

There is however an extra real form of the complex
Lorentz group $\rSL_q^{\hspace{-0.6mm}(s)}\hspace{-0.6mm}(2,\C)\stackrel{\ell}{\oplus} \rSL_{1/q'}^{\hspace{-0.6mm}(s)}\hspace{-0.4mm}(2,\C)$, 
mixing the chiral and antichiral sector, 
that one can naturally identify with the deformed real Lorentz group. 
This is the real form that corresponds to the real $q$-linked 
Minkowski space, that we will introduce in the next section.\\[2mm]

\begin{definition}
We define $\cU_{q,q'}^{\,\,\,\ell}(\mathrm{so}(3,1))$ to be the Hopf star algebra 
$\cU_{q,q'}(\rsl_2^{(s)} \stackrel{\ell}
\oplus \rsl_2^{(s)}), \stl)$
where
 $$
\begin{array}{rclcrcl}
(q^{\bh})^{\stl}&=&q'^{\bh'}&&(q'^{\bh'})^{\stl}&=&q^{\bh}\\[2mm]
(e_\pm)^{\stl}&=&-e'_{\pm}&&(e'_\pm)^{\stl}&=&-e_{\pm}\\[5mm]
&&&\bar{q}=\frac{1}{q'}&&&\\[2mm]
&&&\bar{\ell}=\ell&&&
\end{array}
$$
Similarly we define $\rSO_{q,q'}^{\,\,\,\ell}(3,1)$ as the Hopf star 
algebra  $(\rSL^{\hspace{-0.2mm}(s)}_q\hspace{-0.6mm}(2,\mathbb{C})\stackrel{\ell}{\oplus} \rSL^{\hspace{-0.2mm}(s)}_{1/q'}\hspace{-0.6mm}(2,\mathbb{C}), \srl)$ where 
$$
\begin{array}{rcl}
\begin{pmatrix} a&b\\c&d\end{pmatrix}^{\srl}&=&S\begin{pmatrix} a'&c'\\b'&d'\end{pmatrix}\\[5mm]
\bar{q}&=&\frac{1}{q'}\\[2mm]
\bar{\ell}&=&\ell
\end{array}
$$
\end{definition}

As expected, the star algebras defined above are in perfect duality;
we leave the easy verifications to the reader, they are coming
almost straighforwardly from Prop. \ref{dual}.
of our previous section.

\begin{proposition}
The dual paring given in (\ref{dual}) gives a perfect duality between the 
Hopf star algebras 
$\rSO_{q,q'}^{\,\,\,\ell}(3,1)$ 
and $\cU_{q,q'}^{\,\,\,\ell}(\mathrm{so}(3,1))$.

\end{proposition}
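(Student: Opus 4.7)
The plan is to build on Proposition \ref{dual}, which already provides the perfect, non-degenerate Hopf-algebra pairing between $\cU_{q,q'}(\rsl_2^{(s)}\stackrel{\ell}\oplus \rsl_2^{(s)})$ and $\rSL_q^{(s)}(2,\C)\stackrel{\ell}\oplus \rSL_{1/q'}^{(s)}(2,\C)$. The only additional content of the present statement is the single star-compatibility identity
$$
\langle x^{\stl}, f\rangle \;=\; \overline{\langle x, S(f)^{\srl}\rangle},
$$
to be verified for all $x$ in the QUEA and $f$ in the QFA; the non-degeneracy required for perfect duality is inherited directly from Proposition~\ref{dual}.

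The first step is to observe that this identity propagates automatically from generators to the whole algebras. Indeed, in a Hopf star algebra one has $\Delta \circ \stl = (\stl \otimes \stl)\circ \Delta$, and the general Hopf-algebraic identity $\Delta \circ S = (S\otimes S)\circ \tau \circ \Delta$ controls how $S \circ \srl$ interacts with the coproduct on the QFA side. Combining these with the multiplicative/comultiplicative compatibilities of the pairing already exploited in the proof of Proposition~\ref{dual}, a routine Sweedler computation reduces the identities for $\langle (x_1 x_2)^{\stl}, f\rangle$ and $\langle x^{\stl}, f_1 f_2\rangle$ to the analogous identities on generator pairs, together with antilinearity. The statement thus collapses to a finite generator check.

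The finite check proceeds on the QUEA generators $\{e_\pm, q^{\pm \bh}, e'_\pm, q'^{\pm \bh'}\}$ (together with $\ell^{\pm\bh}$, $\ell^{\pm\bh'}$) and the QFA generators $\{a,b,c,d,a',b',c',d'\}$. Both star operations swap the chiral and antichiral sectors: on the QUEA side directly from the definition, and on the QFA side by the matrix formula for $\srl$, which together with involutivity yields $a^{\srl} = d'$, $b^{\srl} = -q'c'$, $c^{\srl} = -b'/q'$, $d^{\srl} = a'$ and the corresponding formulas for the primed generators. Each generator pair is then checked using the pairing table of Proposition~\ref{dual}; as a representative sample, $\langle (q^{\bh})^{\stl}, a'\rangle = \langle q'^{\bh'}, a'\rangle = q'^{-1/2}$ agrees with $\overline{\langle q^{\bh}, S(a')^{\srl}\rangle} = \overline{\langle q^{\bh}, a\rangle} = \overline{q^{1/2}} = q'^{-1/2}$ by $\bar q = 1/q'$.

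The main obstacle is not conceptual but purely combinatorial book-keeping: one must track, simultaneously, the signs produced by the antipodes $S$, the antilinear action on the scalar parameters $q$, $q'$ and $\ell$ (with $\bar q = 1/q'$ and $\bar\ell = \ell$), and the involutivity-induced star formulas on the primed generators. The self-conjugacy $\bar\ell = \ell$ is exactly what forces the pairings involving $\ell^{\pm \bh}$ and $\ell^{\pm \bh'}$ to close correctly, while the rule $\bar q = 1/q'$ is precisely what turns $q^{1/2}$ on one side into $q'^{-1/2}$ on the other, matching the swap $a \leftrightarrow d'$ implemented by $\srl$. Once each generator pair is tabulated in this fashion, the perfect duality of Hopf star algebras follows.
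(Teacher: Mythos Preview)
Your proposal is correct and follows essentially the same route as the paper: the paper explicitly defers the proof to the reader as a direct consequence of Proposition~\ref{dual}, and in the subsequent remark carries out exactly the same sample generator check $\langle (q^{\bh})^{\stl}, a'\rangle$ versus $\overline{\langle q^{\bh}, S(a')^{\srl}\rangle}$ that you present. Your additional paragraph justifying the reduction to generators via Sweedler-type compatibilities makes the argument more self-contained than the paper's version, but the strategy is identical.
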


Analogously we can give a dual pairing between the various real
forms of the complex Lorentz group 
$\mathrm{SL}_q^{(s)}(2,\C)\stackrel{\ell}{\oplus}  \mathrm{SL}_{q}^{(s)}(2,\C)$ 
and the real forms of $\cU_{q,q'}(\rsl_2^{(s)} \stackrel{\ell}
\oplus \rsl_2^{(s)})$ defined as follows
(refer to Sec. \ref{realf-sec}):
\beq
\begin{array}{rl}
\cU_{q,q'}(\rsu(2)^{(s)} \stackrel{\ell}
\oplus \rsu(2)^{(s)})&:=\, (\cU_{q,q'}(\rsl_2^{(s)} \stackrel{\ell}
\oplus \rsl_2^{(s)}), \ste \oplus \ste) \\ \\
\cU_{q,q'}(\rsl_2(\R)^{(s)} \stackrel{\ell}
\oplus \rsl_2(\R)^{(s)})&:=\, (\cU_{q,q'}(\rsl_2^{(s)} \stackrel{\ell}
\oplus \rsl_2^{(s)}), \sts \oplus \sts) \\ \\
\cU_{q,q'}(\rsu(1,1)^{(s)} \stackrel{\ell}
\oplus \rsu(1,1)^{(s)})&:=\, (\cU_{q,q'}(\rsl_2^{(s)} \stackrel{\ell}
\oplus \rsl_2^{(s)}), \stk \oplus \stk) \\ \\
\cU_{q,q'}(\rsu(1,1)^{(s)} \stackrel{\ell}
\oplus \rsl_2(\R)^{(s)})&:=\, (\cU_{q,q'}(\rsl_2^{(s)} \stackrel{\ell}
\oplus \rsl_2^{(s)}), \stk \oplus \sts) \\ \\
\end{array}
\eeq

We summarize the results in the
following table and we give a remark on how to proceed for the
checks involved to prove the statements.

$$
\begin{tabular}{c|c|c|c}
QFA & QUEA & $\begin{array}{c} \hbox{star str.} \\
 \hbox{QFA} \\ \hbox{QUEA}\end{array}$
&  parameters\\
\hline& &\\
$\rSO_{q,q'}^{\,\,\,\ell}(3,1)$& $\cU_{q,q'}^{\,\,\,\ell}(\mathrm{so}(3,1))$&
$\begin{array}{c} \srl \\ \stl \end{array}$ 
&$\begin{array}{c} \bar{q}=q'^{-1}\,,\\
\bar{\ell}=\ell\end{array}$ \\ \hline&
\\
[-3.5mm]$\rSO_{q,q'}^{\,\,\,\ell}(4)$& 
$\cU_{q,q'}(\rsu(2)^{(s)} \stackrel{\ell}
\oplus \rsu(2)^{(s)})$
& 
$\begin{array}{c} \sre\oplus \sre\\ \ste \oplus \ste \end{array}$ 
&$\begin{array}{c}\bar{q}=q,\,\bar{q}'=q' \\
\bar{\ell}=\ell^{-1}\end{array}$ \\ \hline& &\\[-3.5mm]
$\rSO_{q,q'}^{\,\,\,\ell}(2,2)$
&
$\cU_{q,q'}(\rsl_2(\R)^{(s)} \stackrel{\ell}
\oplus \rsl_2(\R)^{(s)})$ &
$\begin{array}{c} \srs \oplus \srs\\ \sts \oplus \sts\end{array}$
&$\begin{array}{c}\bar{q}=q^{-1},\,\bar{q}'=\sfrac{1}{q'} ,\\
\bar{\ell}=\ell^{-1}\end{array}$ \\ \hline&&\\[-3.5mm]
$\rSO_{\,q,q'}^{'\,\,\,\ell}(2,2)$ &   
$\cU_{q,q'}(\rsu(1,1)^{(s)} \stackrel{\ell}
\oplus \rsu(1,1)^{(s)})$
&
$\begin{array}{c} \srk\oplus \srk\\ \stk \oplus \stk \end{array}$
&$\begin{array}{c}\bar{q}=q,\,\bar{q}'=q' ,\\
\bar{\ell}=\ell^{-1}\end{array}$ \\ \hline&&\\[-3.5mm]
$\rSO_{\,\,q,q'}^{''\,\,\,\ell}(2,2)$&   
$\cU_{q,q'}(\rsu(1,1)^{(s)} \stackrel{\ell}
\oplus \rsl_2(\R)^{(s)})$
&
$\begin{array}{c} \srk\oplus \srs\\ \stk \oplus \sts \end{array}$
&$\begin{array}{c}\bar{q}=q,\,\bar{q}'=\sfrac{1}{q'} ,\\
\bar{\ell}=\ell^{-1}\end{array}$ \\ \hline
\end{tabular}
$$
Note that this table reproduces the central column of table 1 of \cite{luk}. In our notation $*_a\oplus*_b$ and $\star_a\oplus\star_b$ with $a,b=e,s,k$ are the involutions  acting with $*_a$ or $\star_a$ on the chiral sector and with $*_b$ or $\star_b$ on the antichiral one.

We now make some remarks on how the reader can go about the
proof of the dualities expressed in the previous table.

\begin{remark}
When we look at the pairing $\langle x, f\rangle$ with both $x$ and $f$ in the chiral or antichiral sector, the compatibility of the star structures with the dual pairing is guaranteed by the results presented in the previous section. To the only checks involve expressions of type $\langle q^{\bh},a'\rangle$. Note that the Euclidean Klenian and split involutions map $q^{\bh}$ to itself and $a'$ to itself or $d'$, and since $\langle q^{\bh},a'\rangle=1=\langle q^{\bh},d'\rangle$ the consistency of those star structures with the pairing proposed (that is $
\langle x^*,f\rangle=\overline{\langle x,S(f)^\star\rangle}, $  ) follows naturally. 

The compatibility of the Lorentzian star structure can be then proved case by case; as sample calculation we check one mixed term. according to our definition we have
$$
\langle (q^{\bh})^{*_{\textrm{\tiny{l}}}}, a'\rangle=\langle q'^{\bh'}, a'\rangle=(q')^{-\frac 1 2}
$$ 
while
$$
\langle q^{\bh}, S(a')^{{\srl}}\rangle=\langle q^{\bh}, d'^{{\srl}}\rangle=\langle q^{\bh}, a\rangle=q^{\frac 1 2}
$$
implying that $\bar{q}=\frac{1}{q'}$ as required by hypothesis. 
\end{remark}




\section{The $q$-linked Minkowski, Klein and Euclidean spaces}

In this section we want to define the complex quantum Minkowski space as
homogeneous space for the complex Lorentz quantum group, introduced in the previous
section. Then, we will examine its real forms: the real Minkowski, the Klein and
Euclidean quantum spaces, together with their isometry groups.


\subsection{The complex $q$-linked Minkowski space}

In \cite{fl} Ch. 4, a one parameter quantum deformation of the complex Minkowski
space is introduced by giving a deformation of the Poincar\'e group and identifying
the Minkowski space with its subgroup of translations, so to obtain a natural
a quantum homogeneous space structure on it. 
We now want to generalize this construction to a multiparameter
quantum deformation, however focusing on the complex Lorentz
group coaction only.

\begin{definition}
We define the \textit{complex $q$-linked Minkowski space} 
$\mathcal{M}^{\ell,q}_{\mathbb{C}}$ as the 
complex algebra generated by the indeterminates $t_{ij}$, $i,j=1,2$ subject to the relations:
$$
\begin{array}{rclccrcl}
t_{11}t_{12}&=&q\ell \,t_{12}t_{11}&& t_{12}t_{22}&=&q^{-1}\ell \,t_{22}t_{12}\\[3mm]
t_{12}t_{21}&=&t_{21}t_{12}+\ell (q^{-1}-q)t_{11}t_{22}&&t_{11}t_{22}&=&t_{22}t_{11}\\[3mm]
t_{11}t_{21}&=&q^{-1}\ell^{-1}\,t_{21}t_{11}&&t_{21}t_{22}&=&q\ell^{-1}\,t_{22}t_{21}
\end{array}
$$
with a quantum metric (see \cite{fi2}):
$$
Q_{\ell/q}=t_{12}t_{21}-(q^{-1}\ell)\, t_{11}t_{22}
$$
(For the meaning of $q$ and $\ell$ refer to the previous sections).
\end{definition}

The following proposition is a straightforward tedious calculation.

\label{coaction}\begin{proposition}
The complex linked Minkowski space admits a natural coaction of the complex $q$-linked
Lorentz  
group $\rSL_q^{\hspace{-0.6mm}(s)}\hspace{-0.6mm}(2,\C)\stackrel{\ell}{\oplus} \rSL_{q}^{\hspace{-0.6mm}(s)}\hspace{-0.6mm}(2,\C)$
$$
\begin{array}{ccccccl}
\rho&:&\mathcal{M}_{\C}^{\ell,q} & \longrightarrow & \rSL_q^{\hspace{-0.6mm}(s)}\hspace{-0.6mm}(2,\C)\stackrel{\ell}{\oplus} \rSL_{q}^{\hspace{-0.6mm}(s)}\hspace{-0.6mm}(2,\C) &\otimes& \mathcal{M}_{\C}^{\ell,q} \\[3mm] 
&&t_{ij} & \longmapsto &\sum_{s,r} y_{is}S(x_{rj}) &\otimes &t_{sr}
\end{array}
$$
Moreover such coaction preserves the quantum metric $Q_{\ell/q}$, that is
$$
\rho:Q_{\ell/q}\longrightarrow 1\otimes Q_{\ell/q}
$$
\end{proposition}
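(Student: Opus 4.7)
The plan is to recognize the formula $\rho(t_{ij})=\sum_{s,r} y_{is} S(x_{rj})\otimes t_{sr}$ as the quantum analogue of the classical conjugation-style action $T\mapsto YTX^{-1}$: since the antipode of $\rSL_q^{(s)}(2,\C)$ sends $x_{rj}$ to the $(r,j)$ entry of the inverse matrix, the coaction assigns to $t_{ij}$ precisely the $(i,j)$ entry of the formal product $YTX^{-1}$ (with $T$ sitting in the second tensor factor). In the Lorentz case $q=1/q'$ both $(x_{ij})$ and $(y_{ij})$ are $q$-matrices, placing us exactly in the setting meant to deform the classical $\rSL_2(\C)\times \rSL_2(\C)$-action on $2\times 2$ matrices that realizes the complex Minkowski space as a homogeneous space.

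I would organize the proof into three steps. Coassociativity $(\mathrm{id}\otimes\rho)\circ\rho=(\Delta\otimes\mathrm{id})\circ\rho$ and the counit axiom $(\epsilon\otimes\mathrm{id})\circ\rho=\mathrm{id}$ follow formally from the matrix coproduct rules $\Delta(x_{ij})=\sum_k x_{ik}\otimes x_{kj}$ and $\Delta(y_{ij})=\sum_k y_{ik}\otimes y_{kj}$ together with the Hopf identity $\Delta\circ S=(S\otimes S)\circ\tau\circ\Delta$; the resulting combinatorial identity on indices is the same as for the defining matrix coaction and is independent of the specific Minkowski relations.

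The algebra-map property is the main content. For each of the six defining relations of $\mathcal{M}_{\mathbb{C}}^{\ell,q}$ I would expand $\rho(t_{ij})\rho(t_{kl})$ and sort the tensor-first factors using three ingredients: the Manin relations for $(x_{ij})$ with parameter $q$, the Manin relations for $(y_{ij})$ (with parameter $1/q'=q$), and the link relations (\ref{linkr}) mixing the two sectors. The twist term $\ell(q^{-1}-q)t_{11}t_{22}$ in the bracket $[t_{12},t_{21}]$ should emerge from the combination of the chiral Manin twist $da-ad=(q^{-1}-q)bc$ (and its antichiral analogue) with a single $x\leftrightarrow y$ exchange contributing the overall factor of $\ell$. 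For the metric preservation, view $Q_{\ell/q}=t_{12}t_{21}-q^{-1}\ell\, t_{11}t_{22}$ as a quantum determinant of $T$; the conjugation $T\mapsto YTX^{-1}$ classically multiplies the determinant by $(\det Y)(\det X)^{-1}$, and in the quantum setting the centrality of the quantum determinants $ad-q^{-1}bc=1$ and $a'd'-q'b'c'=1$ lets one collapse all cross terms in $\rho(Q_{\ell/q})$ and extract $1\otimes Q_{\ell/q}$.

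The main obstacle is the $\ell$-bookkeeping. Every passage of a $y$-generator past an $x$-generator contributes a power of $\ell^{\pm 1}$ dictated by (\ref{linkr}), and moving past an inverse entry $S(x_{rj})$ flips the sign of that exponent, so the antipode effectively conjugates the link relations. The powers accumulated in the six reorderings must conspire to match the $\ell$-dependence of the defining relations of $\mathcal{M}_{\mathbb{C}}^{\ell,q}$, and those in the metric calculation must cancel against the precise coefficient $q^{-1}\ell$ in $Q_{\ell/q}$. Once the $\ell$- and $q$-exponents are tracked correctly, the remaining manipulations reduce to quantum determinant identities, but producing that tracking rigorously is the combinatorial heart of the proof, which explains the authors' description of the result as a straightforward but tedious computation.
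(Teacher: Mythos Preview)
Your proposal is correct and matches the paper's approach exactly: the paper states only that the result ``is a straightforward tedious calculation'' and gives no further details, so your outline of verifying the comodule axioms formally, checking the algebra-map property relation by relation using the Manin and link relations, and reducing the metric invariance to the quantum determinant conditions is precisely the computation the authors have in mind. Your identification of $\rho$ with the quantum analogue of $T\mapsto YTX^{-1}$ and your remark that the $\ell$-bookkeeping is the only nontrivial bookkeeping are both accurate and in fact more explicit than anything the paper provides.
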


We now want to give some physical interpretation of the
deformation $\mathcal{M}^{\ell,q}_{\mathbb{C}}$.
Consider the two quantum planes
$\C_q[\chi_1,\chi_2]$ and $\C_{q}[\psi_1,\psi_2]$.
These are the algebras generated over the ring $\C_q=\C[q,q^{-1}]$
($q$ here interpreted as an indeterminate)
by the elements $\chi_i$, $\psi_i$, $i=1,2$ subject to the relations:
\beq\label{qplanes}
\chi_1\chi_2=q^{-1}\chi_2\chi_1, \qquad \psi_1\psi_2=q^{-1}\psi_2\psi_1
\eeq
We also refer to a quantum plane with the notation $\C_q^2$.

\begin{definition}
We define \textit{$q$-linked planes} $\C_q^2\stackrel{\ell}{\oplus}
\C_q^2$ as the $\C_q$ algebra generated by 
$\chi_i$, $\psi_i$, $i=1,2$ subject to the relations (\ref{qplanes})
and the additional \textit{linked relations}:
\beq\label{linkedrel}
\begin{array}{rclccrcl}
\chi_1\psi_2&=&\ell^{-\frac 1 2}\psi_2\chi_1&&\chi_2\psi_2&=&\ell^{\frac 1 2}\psi_2\chi_2\\[3mm]
\chi_1\psi_1&=&\ell^{\frac 1 2}\psi_1\chi_1&&\chi_2\psi_1&=&\ell^{-\frac 1 2}\psi_1\chi_2\\[3mm]
\end{array}
\eeq
\end{definition}

A quantum plane $\C_q^2$ is naturally both
a left comodule and a right comodule
(via the antipode) for $\rSL_q(2,\C)$ (see \cite{ma1, fl}).
It is only natural to expect that such comodule structures
are inherited by the $q$-linked planes, so that we have a
coaction of the complex Lorentz quantum group.

In the next proposition, we establish a relation between the comodule structure
of the $q$-linked planes and the comodule structure of the
quantum Minkowski space, according to the separate actions of  
the chiral and antichiral sectors discussed previously.

\begin{proposition}
There is a $\rSL_q^{\hspace{-0.6mm}(s)}\hspace{-0.6mm}(2,\C)\stackrel{\ell}{\oplus} \rSL_{q}^{\hspace{-0.6mm}(s)}\hspace{-0.6mm}(2,\C)$
comodule structure on $\C_q^2\stackrel{\ell}{\oplus}
\C_q^2$ given by:
$$
\begin{array}{ccc}
\C_q^2\stackrel{\ell}{\oplus}
\C_q^2 & \lra & 
\rSL_q^{\hspace{-0.6mm}(s)}\hspace{-0.6mm}(2,\C)\stackrel{\ell}{\oplus} 
\rSL_{q}^{\hspace{-0.6mm}(s)}\hspace{-0.6mm}(2,\C) \,\,
\otimes \, \, \C_q^2\stackrel{\ell}{\oplus} \C_q^2 \\ \\
\begin{pmatrix} \chi_1\\ \chi_2 \end{pmatrix} & \mapsto & 
\begin{pmatrix} y_{11} & y_{12} \\ y_{21} & y_{22} 
\end{pmatrix} \otimes \begin{pmatrix} \chi_1\\ \chi_2 \end{pmatrix}, \\ \\
\begin{pmatrix} \psi_2\\ -q\psi_1 \end{pmatrix} & 
\mapsto & S\begin{pmatrix} x_{11} & x_{12} \\
x_{21} & x_{22} \end{pmatrix} \otimes \begin{pmatrix} 
\psi_2\\ -q\psi_1 \end{pmatrix}, \\ \\
\end{array}
$$
Furthermore, we have a $\rSL_q^{\hspace{-0.6mm}(s)}\hspace{-0.6mm}(2,\C)\stackrel{\ell}{\oplus} \rSL_{q}^{\hspace{-0.6mm}(s)}\hspace{-0.6mm}(2,\C)$
comodule morphism:
$$
\begin{array}{ccc}
\C_q^2\stackrel{\ell}{\oplus}
\C_q^2 & \lra & \mathcal{M}^{\ell,q}_{\mathbb{C}} \\ \\
\begin{pmatrix} \chi_1\\\chi_2 \end{pmatrix}\otimes  \begin{pmatrix} 
\psi_2 & -q\psi_1 \end{pmatrix}
& \mapsto &\begin{pmatrix} t_{11} & t_{12} \\
t_{21} & t_{22}
\end{pmatrix} 
\end{array}
$$
\end{proposition}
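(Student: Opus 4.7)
The plan is to prove the proposition in two stages: first (a) verify that the formulas displayed define a well-defined comodule coaction of $\rSL_q^{(s)}(2,\C) \stackrel{\ell}{\oplus} \rSL_q^{(s)}(2,\C)$ on $\C_q^2 \stackrel{\ell}{\oplus} \C_q^2$; then (b) check that the assignment sending each product $\chi_i \psi'_j$ (with $\psi'_1 := \psi_2$ and $\psi'_2 := -q\psi_1$) to $t_{ij}$ respects both the defining relations of $\mathcal{M}^{\ell,q}_\C$ and the two coactions. Coassociativity and counitality are immediate from the Hopf-algebra structure of the acting algebra, so only the algebraic compatibility and the equivariance actually need work.

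For (a), I would first check that the two sector-wise plane relations $\chi_1 \chi_2 = q^{-1} \chi_2 \chi_1$ and $\psi_1 \psi_2 = q^{-1} \psi_2 \psi_1$ are preserved. These follow by standard Manin-type expansions: after applying $\rho$ to both sides, the resulting cross terms cancel by the Manin relations satisfied by $(y_{ij})$ and by the antipode image $(S(x_{ij}))$, respectively. The substantive content is verifying the four link relations (\ref{linkedrel}). For each, I would expand $\rho$ on both sides and reduce the resulting monomials using the link relations (\ref{linkr}) between $x$- and $y$-generators, together with the Manin relations in each sector. The half-integer exponents $\ell^{\pm 1/2}$ in (\ref{linkedrel}) are tuned precisely so that after one commutation of a $y$-generator past an $x$-generator (which introduces an integer power of $\ell$ from (\ref{linkr})), both sides match.

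For (b), the first step is to check that the products $m_{ij} := \chi_i \psi'_j$ satisfy the six defining relations of $\mathcal{M}^{\ell,q}_\C$. Each one is verified by direct expansion using only the plane relations (\ref{qplanes}) and the link relations (\ref{linkedrel}) to commute $\psi$'s past $\chi$'s. As a representative sample, $m_{11} m_{12} = -q\,\chi_1 \psi_2 \chi_1 \psi_1$; two applications of (\ref{linkedrel}) collect all $\chi$'s to the left and produce a factor $\ell^{1/2}$, while $q\ell \, m_{12} m_{11} = -q\ell\,\chi_1 \psi_1 \chi_1 \psi_2$ reduces analogously and then uses $\psi_1 \psi_2 = q^{-1} \psi_2 \psi_1$; both sides collapse to $-q\ell^{1/2} \chi_1^2 \psi_2 \psi_1$, confirming $m_{11} m_{12} = q\ell \, m_{12} m_{11}$. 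The remaining five relations are of the same shape. Once this is established, $t_{ij} \mapsto m_{ij}$ extends to a morphism of algebras, and the comodule compatibility is then automatic by the multiplicativity of $\rho$: the expansion
\[
\rho(\chi_i \psi'_j) \;=\; \rho(\chi_i)\,\rho(\psi'_j) \;=\; \sum_{s,r} y_{is}\,S(x_{rj}) \otimes \chi_s \psi'_r
\]
matches, term-by-term, the coaction on $t_{ij}$ from Proposition \ref{coaction} under the identification $\chi_s \psi'_r \leftrightarrow t_{sr}$.

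The main obstacle throughout is the bookkeeping of $\ell$-exponents: one must simultaneously track the integer powers of $\ell$ from (\ref{linkr}), the half-integer powers from (\ref{linkedrel}), and the $q$-factors contributed by the Manin and plane relations, and verify case-by-case that they balance. Once the four link-relation checks in (a) are carried out, the six Minkowski-relation checks in (b) reduce to systematic applications of the same commutation rules, with no further conceptual input needed.
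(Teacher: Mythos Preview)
Your proposal is correct and is precisely the ``direct calculation'' the paper invokes as its entire proof; you have simply spelled out the verifications that the authors leave to the reader. One small remark: the arrow in the displayed comodule morphism points from the linked planes to $\mathcal{M}^{\ell,q}_\C$, but as you implicitly recognize, the well-definedness check you perform (that the $m_{ij}=\chi_i\psi'_j$ satisfy the Minkowski relations) is really establishing an algebra map $\mathcal{M}^{\ell,q}_\C \to \C_q^2 \stackrel{\ell}{\oplus} \C_q^2$, $t_{ij}\mapsto m_{ij}$, whose image is the subalgebra generated by the $m_{ij}$; this is the substantive content, and your equivariance argument via $\rho(\chi_i\psi'_j)=\rho(\chi_i)\rho(\psi'_j)$ is exactly right.
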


\begin{proof} Direct calculation. \end{proof}

Notice that, with this interpretation, we view the complex Minkowski
space as a tensor product of two spinor representations: one chiral
the other one antichiral.


\subsection{Real forms of the $q$-linked complex Minkowski space}

We now turn to examine the real forms of $\mathcal{M}^{\ell,q}_{\mathbb{C}}$, that
is the star structures we can impose on $\mathcal{M}^{\ell,q}_{\mathbb{C}}$. 
For each real form we provide its isometry quantum group, that is, we
provide a real form of the $q$-linked complex Lorentz group compatible with
the complex coaction $\rho$.

The proofs
of all the propositions appearing in this section amount just to  tedious checkings, so
we leave them to the reader.

\bigskip
{\bf The real $q$-linked Minkowski space}. We define
the {\it real $q$-linked Minkowski space}
 $\mathcal{M}^{\ell,q}_{3,1}$ as the pair $(\mathcal{M}^{\ell,q}_{\mathbb{C}},\medstar_{\textrm{m}})$ with $\medstar_{\textrm{m}}$ being the star structure: 
\beq\label{starmink2}
\begin{array}{cccc}
\medstar_{\textrm{m}}: & \mathcal{M}^{\ell,q}_{\mathbb{C}} & \longrightarrow & \mathcal{M}^{\ell,q}_{\mathbb{C}} \\
&\begin{pmatrix}t_{11} & t_{12} \\ t_{21} & t_{22} \end{pmatrix} &
\longmapsto & \begin{pmatrix}t_{11} & t_{21} \\ t_{12} & t_{22} \end{pmatrix}\\[4mm]
& q & \longmapsto & q\\
& \ell & \longmapsto & \ell
\end{array}
\eeq
\medskip
\begin{proposition}
The real linked Minkowski space is a quantum homogeneous 
space with respect to the coaction of the quantum group $\rSO_{q}^{\ell}(3,1)$. In other words we have:
$$
\rho \circ \medstar_{\textrm{m}}=(\srl \otimes \medstar_{\textrm{m}})\circ \rho
$$

\end{proposition}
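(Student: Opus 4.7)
The strategy is to reduce the identity to a check on the generators $t_{ij}$ of $\mathcal{M}^{\ell,q}_{\mathbb{C}}$. Since $\medstar_{\textrm{m}}$ is antilinear antimultiplicative, $\srl$ is antilinear antimultiplicative, $\rho$ is $\mathbb{C}$-linear multiplicative, and the tensor product of two antimultiplicative maps is antimultiplicative, both sides of the claimed identity are antilinear antimultiplicative maps $\mathcal{M}^{\ell,q}_{\mathbb{C}} \to \rSO_{q}^{\ell}(3,1)\otimes\mathcal{M}^{\ell,q}_{\mathbb{C}}$. Agreement on the four generators $t_{ij}$, together with compatibility of the central parameters $q,\ell$ (immediate from $\bar{\ell}=\ell$ and the fact that $\medstar_{\textrm{m}}$ fixes $q$ and $\ell$), will therefore suffice.

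The main input is the pair of matrix identities describing $\srl$ on the generators of the complex Lorentz quantum group. From its matrix form one reads off $x_{ij}^{\srl}=S(y_{ji})$. Applying $\srl$ once more, and using the standard Hopf star identity $S(a)^{\srl}=S^{-1}(a^{\srl})$ together with the involution property of $\srl$, one obtains the companion formula $y_{ij}^{\srl}=S(x_{ji})$.

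The heart of the proof is then direct. Starting from $\rho(t_{ij})=\sum_{s,r} y_{is}\,S(x_{rj})\otimes t_{sr}$ and applying $\srl\otimes\medstar_{\textrm{m}}$, the antimultiplicativity of $\srl$ combined with the two identities above yields
$$
\bigl(y_{is}\,S(x_{rj})\bigr)^{\srl}\;=\;S(x_{rj})^{\srl}\,y_{is}^{\srl}\;=\;S^{-1}\bigl(S(y_{jr})\bigr)\,S(x_{si})\;=\;y_{jr}\,S(x_{si}),
$$
while $t_{sr}^{\medstar_{\textrm{m}}}=t_{rs}$. Hence $(\srl\otimes\medstar_{\textrm{m}})(\rho(t_{ij}))=\sum_{s,r} y_{jr}\,S(x_{si})\otimes t_{rs}$, and relabeling $s\leftrightarrow r$ rewrites this as $\sum_{s,r} y_{js}\,S(x_{ri})\otimes t_{sr}$, which is precisely $\rho(t_{ji})=\rho(t_{ij}^{\medstar_{\textrm{m}}})$.

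The only substantive obstacle is notational: keeping the chiral ($x$) and antichiral ($y$) generators correctly paired through the swap induced by $\srl$, since $\srl$ exchanges the two sectors via the antipode. Once the identity $y_{ij}^{\srl}=S(x_{ji})$ has been extracted as above, the remainder is a single application of antimultiplicativity followed by a relabeling of dummy indices; no further input from the specific commutation relations of either the Lorentz algebra or the Minkowski algebra is needed, which explains why the proof is essentially formal and the authors relegate it to the reader.
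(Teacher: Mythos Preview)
Your argument is correct and is precisely the kind of direct verification on generators that the paper intends when it says the proofs ``amount just to tedious checkings'' left to the reader. The key observation you isolate---that $\srl$ swaps the chiral and antichiral sectors via $x_{ij}^{\srl}=S(y_{ji})$ and $y_{ij}^{\srl}=S(x_{ji})$, so that the antipodes cancel appropriately under antimultiplicativity---is exactly what makes the check go through, and your reduction to generators via the antimultiplicativity of both sides is the standard and correct way to organize it.
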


\begin{remark}
Notice that in order to get a well defined coaction on the $q$-linked real Minkowski space,
 we are forced to use a real form of $\rSL^{\hspace{-0.2mm}(s)}_q\hspace{-0.6mm}(2,\mathbb{C})\stackrel{\ell}{\oplus} \rSL^{\hspace{-0.2mm}(s)}_{q}\hspace{-0.6mm}(2,\mathbb{C})$. In
other words we are forced to specialize the parameter $q'$ to $q$.
\end{remark} 

\bigskip
{\bf The $q$-linked Euclidean space}.
In analogy with \cite{flm}, let
$\rSO_{q}^{\ell}(4)$ be the Hopf star algebra  
$\big(\rSL^{\hspace{-0.2mm}(s)}_q\hspace{-0.6mm}(2,\mathbb{C})\stackrel{\ell}{\oplus} \rSL^{\hspace{-0.2mm}(s)}_{q}\hspace{-0.6mm}(2,\mathbb{C}),\sre\oplus \sre\big)$.
We define the {\it real $q$-linked Euclidean space} $\mathcal{M}^{\ell,q}_{4}$ 
to be the pair $(\mathcal{M}^{\ell,q}_{\mathbb{C}},\medstar_{\textrm{e}})$ with $\medstar_{\textrm{e}}$ the star structure given by:
\beq\label{stareucl}
\begin{array}{cccc}
\medstar_{\textrm{e}}: & \mathcal{M}^{\ell,q}_{\mathbb{C}} & \longrightarrow & \mathcal{M}^{\ell,q}_{\mathbb{C}} \\
&\begin{pmatrix}t_{11} & t_{12} \\ t_{21} & t_{22} \end{pmatrix} &
\longmapsto & \begin{pmatrix}-q^{-1}\ell t_{22} & t_{21} \\ t_{12} & -q\ell t_{11} \end{pmatrix}\\[4mm]
& q & \longmapsto & q\\
& \ell & \longmapsto & \ell^{-1}
\end{array}
\eeq

\begin{proposition}
The real linked Euclidean space is a quantum homogeneous with respect to the coaction of the quantum group $\rSO_{q}^{\ell}(4)$. In other words we have:
$$
\rho \circ \medstar_{\textrm{e}}=((\sre\oplus \sre) \otimes \medstar_{\textrm{e}})\circ \rho
$$
\end{proposition}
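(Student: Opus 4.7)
The proof is a direct verification on the generators $t_{ij}$, $i,j\in\{1,2\}$. Both sides of the compatibility are antilinear and antimultiplicative, so agreement on generators suffices. My plan is first to confirm that $\medstar_{\textrm{e}}$ is actually a well-defined star structure on $\mathcal{M}^{\ell,q}_\mathbb{C}$---that is, antilinear and involutive (using $\bar q=q$, $\bar\ell=\ell^{-1}$), and antimultiplicative relative to the six Manin-type commutation relations that define the algebra. While at it, one should also verify the (easy) invariance of $Q_{\ell/q}$. These are straightforward but tedious checks.

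The heart of the matter is the coaction compatibility. The key structural identity is that, by the very definition of the Euclidean star on each sector, $(x_{ij})^{\sre}=S(x_{ji})$ and $(y_{ij})^{\sre}=S(y_{ji})$. Combined with the general Hopf-star property $\sre\circ S=S^{-1}\circ\sre$, this yields
$$\bigl(S(x_{rj})\bigr)^{\sre}=S^{-1}\bigl((x_{rj})^{\sre}\bigr)=S^{-1}\bigl(S(x_{jr})\bigr)=x_{jr}.$$
Applying the antimultiplicative star $\sre\oplus\sre$ to the Hopf factor of $\rho(t_{ij})=\sum_{s,r}y_{is}S(x_{rj})\otimes t_{sr}$ (Proposition \ref{coaction}), one therefore obtains
$$\bigl((\sre\oplus\sre)\otimes \medstar_{\textrm{e}}\bigr)\circ\rho(t_{ij})=\sum_{s,r} x_{jr}\,S(y_{si})\otimes \medstar_{\textrm{e}}(t_{sr}).$$

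Writing $\medstar_{\textrm{e}}(t_{sr})=\kappa_{sr}\,t_{\tilde s\tilde r}$ with $\tilde k:=3-k$ and $\kappa_{11}=-q^{-1}\ell,\ \kappa_{12}=\kappa_{21}=1,\ \kappa_{22}=-q\ell$, the $\mathbb{C}$-linearity of $\rho$ gives
$$\rho\circ \medstar_{\textrm{e}}(t_{ij})=\kappa_{ij}\sum_{s,r} y_{\tilde i s}\,S(x_{r\tilde j})\otimes t_{sr}.$$
After reindexing $(s,r)\mapsto(\tilde s,\tilde r)$ on the other side so that the same basis tensor $\otimes t_{sr}$ appears on both, equality of coefficients reduces to the $16$ Hopf-algebra identities
$$\kappa_{ij}\,y_{\tilde i s}\,S(x_{r\tilde j})\;=\;\kappa_{\tilde s\tilde r}\,x_{j\tilde r}\,S(y_{\tilde s i})\qquad(i,j,s,r\in\{1,2\})$$
in $\rSL_q^{(s)}(2,\mathbb{C})\stackrel{\ell}{\oplus}\rSL_{q}^{(s)}(2,\mathbb{C})$. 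I expect each to follow by substituting the explicit antipode formulas (which produce the relevant $q^{\pm 1}$ factors) and then applying a single link relation (\ref{linkr}) to move one antichiral generator past one chiral generator: the diagonal cases use the commuting relations $aa'=a'a$, $ad'=d'a$, $bb'=b'b$, $cc'=c'c$, etc., while the off-diagonal cases use the $\ell$-twisted ones such as $ac'=\ell c'a$ or $bd'=\ell d'b$, whose $\ell$-twists precisely supply the $\ell^{\pm 1}$ recorded in the $\kappa$'s. The main obstacle is combinatorial rather than conceptual: managing the bookkeeping of signs, $q$-powers from the two antipodes, and $\ell$-powers from the link relations across all $16$ cases, where the specialization of the antichiral parameter (so that both sectors use the same $q$) is implicitly needed for the $q$-powers on the two sides to match. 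Once these identities are verified on generators, the compatibility extends to all of $\mathcal{M}^{\ell,q}_\mathbb{C}$ by antimultiplicativity.
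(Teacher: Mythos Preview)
Your proposal is correct and follows exactly the approach the paper itself indicates: the paper offers no detailed proof here, stating only that ``the proofs of all the propositions appearing in this section amount just to tedious checkings, so we leave them to the reader.'' Your outline is precisely that tedious check carried out explicitly, and the key structural identity you isolate, $(S(x_{rj}))^{\sre}=x_{jr}$ together with $(y_{is})^{\sre}=S(y_{si})$, is the right organizing observation that makes the sixteen cases transparent.
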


\bigskip
{\bf The $q$-linked Klein space}.
As already noticed by \cite{luk}, we have two 
real forms of the quantum group  $\rSL_q^{\hspace{-0.6mm}(s)}\hspace{-0.6mm}(2,\C)\stackrel{\ell}{\oplus} \rSL_{q}^{\hspace{-0.6mm}(s)}\hspace{-0.6mm}(2,\C)$,
both deformations of the Klein isometry group, namely:
$\rSO_{q}^{\ell}(2,2):=\rSL^{\hspace{-0.2mm}(s)}_q\hspace{-0.6mm}(2,\mathbb{R})
\oplus \rSL^{\hspace{-0.2mm}(s)}_q\hspace{-0.6mm}(2,\mathbb{R})$ and $
\rSO_{q}'^{\ell}(2,2):=\rSU^{\hspace{-0.2mm}(s)}_q\hspace{-0.6mm}(1,1)
\oplus \rSU^{\hspace{-0.2mm}(s)}_q\hspace{-0.6mm}(1,1)$.

Notice that, in principle, we could have a third possibility, mixing the
two components, however in this case, it is not possible to construct a meaningful
coaction on a suitable space, so we discard it, being outside our scope.

\medskip
Let us define the {\it real $q$-linked Klein space} $\mathcal{M}^{\ell,q}_{2,2}$ as  the pair $(\mathcal{M}^{\ell,q}_{\mathbb{C}},\medstar_{\textrm{s}})$ with $\medstar_{\textrm{s}}$ being the star structure given by:
\beq\label{starklein1}
\begin{array}{cccc}
\medstar_{\textrm{s}}: & \mathcal{M}^{\ell,q}_{\mathbb{C}} & \longrightarrow & \mathcal{M}^{\ell,q}_{\mathbb{C}} \\
&\begin{pmatrix}t_{11} & t_{12} \\ t_{21} & t_{22} \end{pmatrix} &
\longmapsto & \begin{pmatrix}t_{11} & t_{12} \\ t_{21} & t_{22} \end{pmatrix} \\[4mm]
& q & \longmapsto & q\\
& \ell & \longmapsto & \ell
\end{array}
\eeq
Let us remember that when dealing with the real form $\rSO_{q}^{\ell}(2,2)$ it is convenient to work with a multiplicative involution and real parameters $q$ and $\ell$ and this is consitent with the previous construction.\\[2mm]

We also define the {\it real $q$-linked Klein space} $\mathcal{M}^{'\,\ell,q}_{\,\,2,2}$  as  the pair $(\mathcal{M}^{\ell,q}_{\mathbb{C}},\medstar_{\textrm{k}})$ with $\medstar_{\textrm{k}}$ being the star structure:
\beq\label{starklein2}
\begin{array}{cccc}
\medstar_{\textrm{k}}: & \mathcal{M}^{\ell,q}_{\mathbb{C}} & \longrightarrow & \mathcal{M}^{\ell,q}_{\mathbb{C}} \\
&\begin{pmatrix}t_{11} & t_{12} \\ t_{21} & t_{22} \end{pmatrix} &
\longmapsto & \begin{pmatrix}q^{-1}\ell t_{22} & t_{21} \\ t_{12} & q\ell t_{11} \end{pmatrix}\\[4mm]
& q & \longmapsto & q\\
& \ell & \longmapsto & \ell^{-1}
\end{array}
\eeq


\begin{proposition}
The real $q$-linked Klein spaces $\mathcal{M}^{'\,\ell,q}_{\,\,2,2}$  and $\mathcal{M}^{\ell,q}_{2,2}$  are quantum homogeneous spaces with respect to the coaction of the quantum groups $\rSO_{q}^{\ell}(2,2)$ and $\rSO_{\,\,q}^{'\,\ell}(2,2)$ that is:
$$
\rho \circ \medstar_{\textrm{s}}=((\srs\oplus \srs) \otimes \medstar_{\textrm{s}})\circ \rho
$$
and
$$
\rho \circ \medstar_{\textrm{k}}=((\srk\oplus \srk) \otimes \medstar_{\textrm{k}})\circ \rho
$$
\end{proposition}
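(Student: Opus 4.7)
The plan is to reduce each of the two identities to a direct check on the algebra generators $t_{ij}$ of $\mathcal{M}^{\ell,q}_{\mathbb{C}}$. Since $\rho$ is an algebra map and $\medstar_{\textrm{s}}$, $\medstar_{\textrm{k}}$ are (anti-)algebra maps, it suffices to verify the intertwining relations on generators, exactly as in the preceding propositions for the real Minkowski and Euclidean cases. Concretely, I would first expand $\rho(t_{ij}) = \sum_{s,r} y_{is}\,S(x_{rj}) \otimes t_{sr}$ using $S\begin{pmatrix}a&b\\c&d\end{pmatrix} = \begin{pmatrix}d&-qb\\-q^{-1}c&a\end{pmatrix}$ and its primed analogue, and then compare the two sides entry by entry.

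For the first Klein case with $\medstar_{\textrm{s}}$ and $\srs\oplus\srs$ the verification is immediate: $\medstar_{\textrm{s}}$ fixes every $t_{ij}$, $q$ and $\ell$, and, per the remark following the definition of $\mathcal{M}^{\ell,q}_{2,2}$, one is allowed to use the multiplicative variant of $\srs$ with $q,\ell\in\mathbb{R}$ and every generator $x_{ij},y_{ij}$ fixed. Both sides of $\rho\circ\medstar_{\textrm{s}} = ((\srs\oplus\srs)\otimes \medstar_{\textrm{s}})\circ\rho$ then collapse to $\sum_{s,r} y_{is}\,S(x_{rj})\otimes t_{sr}$, since the only scalars produced by $S$, namely $-q$ and $-q^{-1}$, are real.

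For the second Klein case with $\medstar_{\textrm{k}}$ and $\srk\oplus\srk$ the calculation is more delicate. I would apply $\medstar_{\textrm{k}}$ to each $t_{ij}$ via (\ref{starklein2}) and compute $\rho(\medstar_{\textrm{k}}(t_{ij}))$; in parallel I would apply $((\srk\oplus\srk)\otimes \medstar_{\textrm{k}})$ to $\rho(t_{ij})$, making use of the antimultiplicativity of $\srk$, the formulas $(a,b,c,d)^{\srk} = (d,qc,q^{-1}b,a)$ together with the primed analogue, and the rule $\bar\ell=\ell^{-1}$ dictated by the table in Section \ref{realforms}. The expected matching arises because the swap $a\leftrightarrow d$, $b\leftrightarrow c$ (with $q$-factors) induced by $\srk$, combined with the corresponding swap in the primed sector, mirrors precisely the swap $t_{11}\leftrightarrow t_{22}$ (with $q\ell$-factors) induced by $\medstar_{\textrm{k}}$ in (\ref{starklein2}).

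The main obstacle will be bookkeeping of the $q^{\pm 1}$ and $\ell^{\pm 1}$ scalars. When $\srk$ is applied antimultiplicatively to $y_{is}S(x_{rj})$, the order of the factors reverses; reordering back by means of the link relations (\ref{linkr}) produces $\ell^{\pm 1}$ factors which must cancel exactly against those originating from $\bar\ell=\ell^{-1}$ and from the image under $\medstar_{\textrm{k}}$. Once this cancellation is organized case by case, the verification reduces to a handful of essentially distinct matrix-entry identities (by the symmetry of $\rho$ under the exchanges $1\leftrightarrow 2$ in the two indices), each of which is a short direct check using only the Manin relations (\ref{manin-rels}), the link relations (\ref{linkr}), and the centrality of the quantum determinants $\det$ and $\det'$.
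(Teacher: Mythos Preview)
Your proposal is correct and follows precisely the approach the paper intends: the paper explicitly states that ``the proofs of all the propositions appearing in this section amount just to tedious checkings, so we leave them to the reader,'' and your outline---reducing to generators, expanding $\rho(t_{ij})=\sum_{s,r}y_{is}S(x_{rj})\otimes t_{sr}$, applying the relevant star structures, and matching powers of $q$ and $\ell$ via the Manin and link relations---is exactly that tedious check. Your invocation of the multiplicative variant of $\srs$ with real $q,\ell$ in the first case is also in line with the paper's own remark following the definition of $\mathcal{M}^{\ell,q}_{2,2}$.
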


We conclude this section with some remarks and comments.

\begin{remark}
\begin{enumerate}
\item 
One may try to construct a three parameter deformation ($q$ $q'$ and $\ell$)
of the complex Minkowski space, however, then trying to construct a
coaction of the $q$-linked complex 4-d orthogonal group
$\rSL_q^{\hspace{-0.6mm}(s)}\hspace{-0.6mm}(2,\C)\stackrel{\ell}{\oplus} \rSL_{1/q'}^{\hspace{-0.6mm}(s)}\hspace{-0.4mm}(2,\C)$, the parameter $q'$ will be forced
to specialize to $q$.  

A similar, but different, phenomenon occurs when trying to construct 
the real quantum Klein space.
as quantum homogeneous space with respect the coaction of 
$\rSO_{\,\,q}^{''\,\ell}(2,2)$. In this case we cannot specialize $q'$ to $q'$ 
due to the star structure involved, hence we are unable to provide
such deformation. It is not clear at the moment the possible physical interpretation of such constraints and we plan to tackle this interesting question
in a future project.

\item As one can readily check, 
all the star structures on $\mathcal{M}^{\ell,q}_{\mathbb{C}}$ preserve
the metric $Q_{\ell/q}$ and consequently the coaction of the
symmetry group of each of our real form will also preserve such
metric.
\end{enumerate}

\end{remark}

\section{Conclusions}
We have provided QUEA and QFA deformations of the complex orthogonal
group in four dimensions together with its real forms, thus obtaining an
enlarged version of the Table 1 in \cite{luk}.
For each of the real forms (excluding
the quaternionic one), we have realized a quantization of the corresponding
real homogeneous space: Minkowski, Euclidean and Klein, where the real
form of the complex orthogonal group appears as the isometry group, i.e.
preserves the metric.

Hence, according to the results
in [12], we are classifying the real forms of the complex orthogonal
group in four dimensions, which admit a coaction on real forms
of the complex Minkowski space in the standard-standard case only
(see Sec. 5.1).

We are however forced to reduce our three parameter deformation to a
two parameter one. We plan to examine these missing cases in a future
paper.
We summarize our constructions in the following table:\\[5mm]
\bigskip
\begin{center}
\begin{tabular}{c|c|c|c}
Isometry group & Q. Homo. Space & 
\scalebox{1.9}{$\star$}& Eq. no.\\\hline&&\\[-3.5mm]
$\rSO_{q}^{\,\,\,\ell}(3,1)$& $\mathcal{M}^{\ell,q}_{3,1}$
&$\medstar_m$& (\ref{starmink2}) 
\\ 
\hline
$\rSO_{q}^{\,\,\,\ell}(4)$& $\mathcal{M}^{\ell,q}_4$
&$\medstar_e$& (\ref{stareucl})\\
\hline
$\rSO_{q}^{\,\,\,\ell}(2,2)$& $\mathcal{M}^{\ell,q}_{2,2}$
&$\medstar_s$& (\ref{starklein1})\\
\hline
$\rSO_{q}'^{\,\,\,\ell}(2,2)$& $\mathcal{M'}^{\ell,q}_{2,2}$
&$\medstar_k$& (\ref{starklein2})\\
\hline
\end{tabular}
\end{center}


\begin{thebibliography}{aaa}

\bibitem{Snyder} H. S. Snyder, \textit{Quantized Space-Time}, Phys. Rev.
\textbf{71}, 38 (1947).



\bibitem{frt} L. D. Faddeev, N. Y. Reshetikhin, L. A. Takhtajan, 
{\it Quantization of
Lie groups and Lie algebras}, Algebraic analysis, Vol. I,
129--139, Academic Press, Boston, MA, 1988.

\bibitem{NC-1} A. Connes,  \textit{Noncommutative Geometry}, 
Academic Press, San Diego USA (1994).

\bibitem{NC-2} M. R. Douglas, N. A. Nekrasov, \textit{Noncommutative field
theory}, Rev. Mod. Phys. \textbf{73}, 977 (2001).

\bibitem{NC-3} R. J. Szabo, \textit{Quantum field theory on noncommutative
spaces}, Phys. Rept. \textbf{378}, 207 (2003).


\bibitem{NC-4} P. Podle\'{s}, S. L. Woronowicz, \textit{Quantum deformation
of Lorentz group}, Commun. Math. Phys. \textbf{130}, 381 (1990).

\bibitem{ma2}
Y.~I.~Manin,    \textit{Topics in noncommutative geometry"},
M. B. Porter Lectures, Princeton University Press, Princeton (1991).

\bibitem{NC-5} S. Doplicher, K. Fredenhagen, J. E. Roberts, \textit{The
quantum structure of spacetime at the Planck scale and quantum fields},
Commun. Math. Phys. \textbf{172}, 187 (1995).


\bibitem{NC-6} J. J. Heckman, H. Verlinde, \textit{Covariant non-commutative
space-time}, Nucl. Phys. \textbf{B894}, 58 (2015).


\bibitem{quantum-groups} V. Drinfeld, \textit{Quantum
Groups}, Proc. Int. Congress of Math. Berkeley, Academic Press \textbf{1},
798 (1986).


\bibitem{cp} V. Chari, A. Pressley, \textit{
A guide to quantum groups},
Cambridge Press, 1994.




\bibitem{luk} A. Borowiec, J. Lukierski, V. N. Tolstoy,
\textit{Basic quantizations of $D=4$ Euclidean, 
Lorentz, Kleinian and quaternionic $\mathfrak{o}^{\star}(4)$ symmetries}, 
JHEP 1711, (2017), 187.
  
  
  \bibitem{FL-1} R. Fioresi, E. Latini, \textit{The symplectic origin of
conformal and Minkowski superspaces}, J. Math. Phys. \textbf{57} (2016)
no.2, 022307. 

\bibitem{flm2}
  R. Fioresi, E. Latini and A. Marrani, 
{\it Klein and conformal superspaces, split algebras and spinor orbits.}, 
Rev. Math. Phys. 29, (2017), 0011.
  
  

\bibitem{flm} R. Fioresi, E. Latini, A. Marrani,
\textit{Quantum Klein Space and Superspace},   arXiv:1705.01755.

\bibitem{Zampini}  G. Landi, A. Zampini, 
{\it Calculi, Hodge operators and Laplacians on a quantum Hopf fibration}, Rev. Math. Phys. 23 (2011), 575-613.






  
\bibitem{fl} R. Fioresi, M. A. Lled\'{o}, \textit{The Minkowski and
Conformal superspaces"}, World Scientific Publishing (2015).  




\bibitem{cfln} D. Cervantes, R. Fioresi, M. A. Lled\'{o}, F. A. Nadal,
{\it Quadratic deformation of Minkowski space},
Fortschr. Phys. {\bf 60} No. 9  10, 970-976.





\bibitem{fi2} R. Fioresi, {\it Quantizations of flag manifolds
        and conformal space time.}
        Reviews in Mathematical Physics. {\bf 9}, no.4 (1997) 453-465.

\bibitem{fi3} Fioresi, R.,  {\it A deformation of the
big cell inside the Grassmannian manifold $G(r,n)$},
Rev. Math. Phy. {\bf 11}, 25-40  (1999).

\bibitem{fi4} Fioresi, R., {\it Quantum deformation of the flag variety}
        Communications in Algebra, {\bf 27}, n. 11 (1999).





\bibitem{flv} R. Fioresi, M. A. Lledo, V. S. Varadarajan
\textit{The Minkowski and conformal superspaces},
J.Math.Phys., 48, 113505, (2007).

\bibitem{cfl} D. Cervantes, R. Fioresi, M. A. Lledo,
{\it The quantum chiral Minkowski and conformal superspaces},
 Advances in Theoretical and  Mathematical  Physics. {\bf 15} no.2 (2011) 565-620.

\bibitem{flln}
R. Fioresi, E. Latini, M. A. Lledo, F. A. Nadal, {\it The Segre embedding of the quantum conformal superspace}, arXiv:1709.03075.

\bibitem{Manin2}
Manin Yu I, \emph{Multiparametric quantum deformation of the general linear supergroup}, Commun. Math. Phys. 123 163-75.

\bibitem{Sudbery}
A Sudbery, \emph{Consistent multiparameter quantisation of $\textit{GL}(n)$}, 
Journal of Physics A: Mathematical and General, Volume 23, Number 15 (1989).

\bibitem{qspin}
Carow-Watamura, U. Schlieker, M. Scholl, S. Watamura,
\emph{A quantum Lorentz group}, Int. Jour. of
Mod. Phys. A6, 3081 (1991). 

\bibitem{Landi}
Michel Dubois-Violette, Giovanni Landi, \emph{non commutative Euclidean spaces}, 	arXiv:1801.03410 [math.QA].


\bibitem{tv} P. Truini, V.S. Varadarajan,
(1991). \emph{The Concept of a Quantum Semisimple Group}, Lett. in Math.
 Phys., 21, 287, (1991).

\bibitem{kassel} C. Kassel, \textit{Quantum Groups},
Springer, 1991.

\bibitem{ks} A. Klimyk, K. Schm\"udgen, 
\textit{Quantum Groups and Their Representations}, Springer, 1997.



\bibitem{ma1}
Y.~I.~Manin.   \textit{Topics in noncommutative geometry},
M. B. Porter Lectures.
Princeton University Press, Princeton, NJ, 1991.




\bibitem{Herranz}
A. Ballesteros, F. J Herranz, \textit{ Universal R-matrix for non-standard quantum $\mathfrak{sl}(2,\mathbb{R})$}, Journal of Physics A: Mathematical and General, Volume 29, Number 13, (1995). 

\bibitem{Manin}
E.E. Demidov, Yu.I. Manin, E.E. Mukhin, D.V. Zhdanovich, \textit{Nonstandard quantum deformations of GL(n) and constant solutions of the Yang-Baxter equation}, Prog. Theor. Phys. Suppl. 102 (1990) 203-218.



\bibitem{Ogie}
O. Ogievetsky, \emph{Hopf structures on the Borel subalgebra of $\mathrm{sl}(2)$}, Proceedings of the Winter School "Geometry and
Physics". Circolo Matematico di Palermo, Palermo, 1994. Rendiconti del Circolo Matematico di
Palermo, Serie II, Supplemento No. 37. pp. [185]--199. 


\bibitem{Ohn}
Ch. Ohn, \emph{A * product on $\mathrm{SL}(2)$ and the corresponding nonstandard $\cU(\mathrm{sl}(2))$}, Letters in Mathematical Physics
June 1992, Volume 25, Issue 2, pp 85-88.








\end{thebibliography}
\end{document}